\newcommand{\bra}[1]{\langle #1|}
\newcommand{\ket}[1]{|#1\rangle}
\newcommand{\ketbra}[1]{| #1\rangle \langle #1|}
\newcommand{\be}{\begin{equation}}
\newcommand{\ee}{\end{equation}}
\newcommand{\eea}{\end{eqnarray}}
\newcommand{\bea}{\begin{eqnarray}}
\newcommand{\eins}{\openone}
\newcommand{\comp}[1]{\ensuremath{\overline{#1}}}
\newcommand{\WW}{\ensuremath{W}}
\newcommand{\NN}{\ensuremath{\mathcal{N}}}
\newcommand{\BB}{\ensuremath{\mathcal{B}}}
\newcommand{\OO}{\ensuremath{\mathcal{O}}}
\newcommand{\kommentar}[1]{}
\newcommand{\trace}{{\rm Tr}}
\newcommand{\NNN}{\ensuremath{\widetilde{\mathcal{N}}}}
\newcommand{\UU}{\ensuremath{\mathcal{U}}}
\newcommand{\ie}{i.e.}
\renewcommand{\vr}{\ensuremath{\varrho}}
\newcommand{\forget}[1]{}
\newtheorem{lemma}{Lemma}
\newtheorem{proposition}[lemma]{Proposition}
\begin{document}
\title{
Taming multiparticle entanglement
}

\author{Bastian Jungnitsch}
\affiliation{Institut f\"{u}r Quantenoptik und Quanteninformation,
\"{O}sterreichische Akademie der Wissenschaften, Technikerstra{\ss}e
21A, A-6020 Innsbruck, Austria\\}
\author{Tobias Moroder}
\affiliation{Institut f\"{u}r Quantenoptik und Quanteninformation,
\"{O}sterreichische Akademie der Wissenschaften, Technikerstra{\ss}e
21A, A-6020 Innsbruck, Austria\\}
\author{Otfried G\"uhne}
\affiliation{Fachbereich Physik, Universit\"at Siegen, Walter-Flex-Stra{\ss}e 3, D-57068 Siegen, Germany\\}
\affiliation{Institut f\"{u}r Quantenoptik und Quanteninformation,
\"{O}sterreichische Akademie der Wissenschaften, Technikerstra{\ss}e
21A, A-6020 Innsbruck, Austria\\}

\date{\today}
\begin{abstract}
We present an approach to characterize genuine multiparticle entanglement 
using appropriate approximations in the space of quantum states. This leads
to a criterion for entanglement which can easily be calculated using semidefinite
programming and improves all existing approaches significantly. Experimentally, it 
can also be evaluated when only some observables are measured. Furthermore, 
it results in a computable entanglement monotone for genuine multiparticle 
entanglement. Based on this, we develop an analytical approach for the entanglement
detection in cluster states, leading to an exponential improvement compared with existing schemes.
\end{abstract}

\pacs{03.67.Mn, 03.65.Ud}
\maketitle

{\it Introduction ---} 
The characterization of multiparticle quantum correlations is relevant 
for many physical systems like atoms in optical lattices, 
superconducting qubits or nitrogen-vacancy centers in diamond, to only name some recent
examples \cite{experiments}. In the 
field of quantum information, multiparticle entanglement is 
viewed as a resource, enabling tasks like measurement-based quantum 
computation \cite{mqc} or high-precision metrology \cite{metrology}. In spite of many efforts, the 
characterization of these correlations turns out to be difficult. Especially 
genuine multipartite entanglement, which is most important 
from the experimental point of view, remains unruly and only scattered 
results concerning its characterization are known \cite{biseparability, multient, seevinck, guehneseevinck}.

In this Letter, we derive a general method to characterize genuine multiparticle 
entanglement using suitable relaxations. This relaxed problem turns 
out to be good-natured, can be tackled with different methods and
results in a criterion that can be considered as a generalization of the 
Peres-Horodecki criterion \cite{PPTcriterion} to the multipartite case.
The goal of our work is two-fold: First, we present powerful criteria 
for genuine multiparticle entanglement, which can be efficiently evaluated 
using semidefinite programming and improve existing conditions significantly.
They work for multi-qubit systems, continuous-variable or hybrid
systems and can be evaluated, even if the mean values of only a few 
observables are known. Furthemore, they lead to a computable entanglement
monotone for genuine multiparticle entanglement.

Second, our method allows to analytically derive entanglement conditions for 
the family of cluster states \cite{clusterstate}, which are important states for tasks like 
measurement-based quantum computation. The sensitivity of these 
conditions improves exponentially with the number of qubits, which is an exponential
gain compared with the existing conditions. As a sideproduct of our investigations, 
we will also estimate the volume of the set of genuinely multipartite entangled states and 
gain insight into the geometrical form of the set of biseparable states.

{\it Situation ---} We start by considering a three-particle quantum
state $\vr.$ This state is separable with respect to some bipartition, say, $A|BC$, if it is a mixture of product states with respect to this
bipartition, \mbox{$\vr = \sum_k q_k \ketbra{\phi^{k}_{A}} \otimes \ketbra{\psi^{k}_{BC}}$},
where the $q_k$ form a probability distribution. We denote these states by $\vr_{A|BC}^{\rm sep}$. Similarly, we can define the separable states
for the two other possible bipartitions, $\vr_{B|AC}^{\rm sep}$
and $\vr_{C|AB}^{\rm sep}$.

\begin{figure}
\includegraphics[width=0.8\columnwidth]{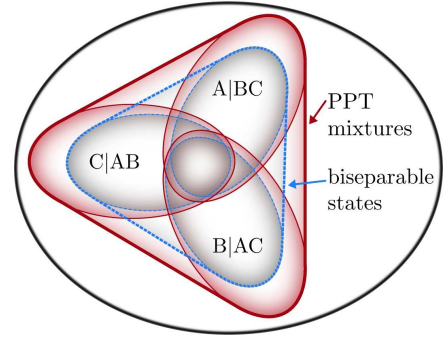}
\caption{\label{fig:PPTsets}For three qubits, there are three convex sets of states that are separable with respect to a fixed bipartition, namely the bipartitions $A|BC$, $B|AC$ and $C|AB$ (blue, dashed lines). Their convex hull (thick blue, dashed line) is the set of biseparable states. Each of the three sets is contained within the set of states that are PPT with respect to the corresponding bipartition (red, solid lines). Their convex hull forms the set of PPT mixtures (thick red, solid line).}
\end{figure}

Then, a state is called {\it biseparable} if it can be written as a mixture 
of states which are separable with respect to different bipartitions \cite{biseparability}. 
That is, one has
\begin{align}
\varrho^{\rm bs} &= p_1 \vr_{A|BC}^{\rm sep} + p_2 \vr_{B|AC}^{\rm sep} + p_3 \vr_{C|AB}^{\rm sep}.
\end{align}
On the other hand, a  state that is not biseparable is called 
{\it genuinely multipartite entangled}. Whenever we talk about multipartite entangled 
states in the following, we refer to genuinely multipartite entangled states.

To characterize multipartite entanglement, we apply the method 
illustrated by Fig.~\ref{fig:PPTsets}. Instead of states 
like $\vr_{A|BC}^{\rm sep}$ that are separable with respect to a fixed bipartition, we 
consider a larger set of states, which can be more easily characterized.
For instance, for the bipartition $A|BC$ we may consider states which have a 
positive partial transpose (PPT) \cite{PPTdef}.
It is well-known that separable states are 
also PPT \cite{PPTcriterion}. We denote such states by $\vr_{A|BC}^{\rm ppt}$ 
(and analogously for the other bipartitions).

Thus, we ask whether a state can be written as 
\begin{align}
\varrho^{\rm pmix} &= p_1 \vr_{A|BC}^{\rm ppt} + p_2 \vr_{B|AC}^{\rm ppt} + p_3 \vr_{C|AB}^{\rm ppt}.
\end{align}
We call states of this form {\it PPT mixtures}. Clearly, any
biseparable state is a PPT mixture, so proving that a state is no PPT mixture implies 
genuine multipartite entanglement. There are examples of states, which are 
PPT with respect to any bipartition, but nevertheless multipartite entangled \cite{pianimora}. Hence, 
not all multipartite entangled states can be detected in this way, but, as we will see, 
often the set of PPT mixtures is a very good approximation to the set of biseparable states.
Finally, note that all definitions can be extended to $N$ particles. Also, one may use other relaxations 
of bipartite separability, e.g. apply the criterion of Doherty et al. \cite{doherty}.

The advantage of considering PPT mixtures instead of biseparable states is 
that the set of PPT mixtures can be fully characterized with the method 
of linear semidefinite programming (SDP) \cite{sdp} --- a standard problem of constrained 
convex optimization theory. Moreover, PPT mixtures can also be characterized 
analytically.

{\it Characterization via entanglement witnesses ---}
An {\it entanglement witness} is an observable $\WW$
that is non-negative on all biseparable states, but has a negative expectation 
value on at least one entangled state. Let us first consider two particles, 
A and B. Then a {\it decomposable witness} is a witness $\WW$ that can be 
written as $\WW = P + Q^{T_A}$, where $P$ and $Q$ have no negative eigenvalues, 
(they are positive semidefinite, $P, Q \geq 0$) and $T_A$ is the partial transpose 
with respect to A \cite{decwit}. 

For more than two particles, we call a witness $\WW$ {\it fully decomposable} 
if, for every subset $M$ of all systems, it is decomposable with respect to the
bipartition given by $M$ and its complement $\overline{M}.$
This means, there exist positive semidefinite operators $P_M$ and $Q_M$, such that 
\be
\label{eq:def_decW}
{\rm for \: all} \: M: \: \WW = P_M + Q_M^{T_M} \: .
\ee
This observable is positive on all PPT mixtures, as
it is non-negative on all states which are PPT with respect to some bipartition.
But also the converse holds:

\textbf{Observation.} 
If $\varrho$ is not a PPT mixture, then there exists a fully decomposable witness $\WW$ 
that detects $\varrho$.

{\it Proof ---} 
The set of PPT mixtures is convex and compact. Therefore, for any state outside of it, 
there is a witness that is positive on all PPT mixtures. Furthermore, positivity on 
all states that are PPT with respect to a fixed (but arbitrary) bipartition implies that the witness is decomposable with respect to 
this fixed (but arbitrary) bipartition \cite{decwit}. Thus, $\WW = P_M + Q_M^{T_M}$ for any $M$. \hfill $\square$

{\it Practical evaluation ---} 
To find a fully decomposable witness for a given state, the convex optimization technique SDP becomes important, since it allows us to optimize over all fully decomposable witnesses.
Given a multipartite state $\varrho$, the search is given by
\begin{align}
\label{eq:sdp1}
&\min \:\trace  (\WW \varrho)\\
&\begin{aligned}
{\mbox{s.t.}} \: &\trace(\WW) = 1 \: \mbox{and for all} \: M : \nonumber \\
&\WW = P_M + Q_M^{T_M}, Q_M \geq 0 ,\:P_M \geq 0 \nonumber \:.
\end{aligned}
\end{align}
The free parameters are given by $W$ 
and the operators $P_M$ for every subset $M$. If the minimum in Eq.~(\ref{eq:sdp1}) is negative, 
$\varrho$ is not a PPT mixture and hence genuinely multipartite  
entangled. The operator $\WW$ for which the negative minimum is 
obtained is a fully decomposable witness. 
Note that, due to \mbox{$X^{T_M} = (X^T)^{T_{\overline{M}}}$} and 
$X \geq 0 \Leftrightarrow X^T \geq 0$, a witness that is decomposable 
with respect to $M$ is also decomposable with respect to $\overline{M}$. 
Thus, one needs to check only half of all subsets in practice.

Eq.~(\ref{eq:sdp1}) has the form of a semidefinite 
program \cite{sdp}. In contrast to usual optimization problems, global optimality of an SDP can be certified 
and the solution can efficiently be computed via interior-point methods. 
In practice, Eq.~(\ref{eq:sdp1}) can be solved with few lines of code, using e.g. the parser YALMIP \cite{yalmip} and, as solvers, SeDuMi \cite{sedumi} or SDPT3 \cite{sdpt3}. Our implementation in MATLAB named PPTMixer can be found online \cite{matlabcentral}. 

Let us discuss two variations of Eq.~(\ref{eq:sdp1}).
First, in order to  reduce the number of free parameters, one can restrict 
oneself to witnesses $\WW$ that obey $\WW^{T_M} \geq  0$ for all $M$, \ie, 
$P_M = 0 $ for all $ M$. In the following, we will call these witnesses {\it fully PPT witnesses}. 
For bipartite systems, decomposable witnesses and fully PPT witnesses detect
the same states. For the multipartite case, fully PPT 
witnesses are not as good as fully decomposable witnesses, 
but they are easier to characterize.

Second, this SDP can also be modified to account for the case that, instead 
of a full tomography, only a restricted set of observables is measured.
Let $\OO = \lbrace O_1,... O_k \rbrace$ be such a set of observables. Then, 
adding $\WW = \sum_{i=1}^{k} \lambda_i O_i$ to the constraints in Eq.~(\ref{eq:sdp1}) 
results in an SDP that searches for witnesses which can be evaluated knowing these observables. Note that
for this program the free parameters are given by the real numbers $\lambda_i$ 
and their number might be considerably smaller than in Eq.~(\ref{eq:sdp1}). 
If the minimum in Eq.~(\ref{eq:sdp1}) is non-negative, there exists a
PPT mixture with expectation values $\langle O_i \rangle$. However, one may then
add further observables to $\OO$ and run the SDP again. Repeating this procedure gives 
more and more sensitive tests. We will discuss an example later.
In practice, this program can even decide separability if the $O_i$ 
have already been measured, so it can be used to gain new insights into already performed
experiments.

But before proceeding to the examples, let us note three more facts.
First, in the formulation no dimension of the Hilbert space is fixed. 
Consequently, our approach is valid for any dimension and combined with the methods of Ref.~\cite{schukinvogel} it can be directly used to study multipartite entanglement in continuous-variable or hybrid systems \cite{haeseler}. For continuous variables, it can be employed complementary to the methods of Ref.~\cite{hylluseisert}.

Second, our approach can also be used to {\it quantify}
genuine multipartite entanglement. If in Eq.~(\ref{eq:sdp1})
the trace normalization $\trace(\WW)=1$ is replaced by 
$0 \leq P_M \leq \openone$ and $0 \leq Q_M \leq \openone$, the 
negative witness expectation value is a multipartite entanglement 
monotone, since it obeys the following properties: (i) It vanishes on all biseparable states. (ii) It is convex. (iii) The quantity does not increase under protocols that consist of local operations of each party and classical communication between them. (iv) It is invariant under local basis changes. While most of these properties are straightforward to see --- in particular, (iv) is implied by (iii) ---, the proof of property (iii) is more technical (cf. Appendix A). Note that, in the bipartite case, this monotone becomes the negativity~\cite{negativity}.

Third, as mentioned before, there are other possible choices of supersets for the set of separable states, e.g. the set of states that have a symmetric extension on a larger Hilbert space (cf. Appendix B and \cite{doherty}).

{\it Numerical examples ---} 
We test the criterion of Eq.~(\ref{eq:sdp1}) for important pure three- and four-qubit 
states prepared in many experiments \cite{states}, using the white noise tolerance
as a figure of merit. It is given by the maximal amount $p_{\rm tol}$ of white noise for which the state 
\mbox{$\varrho(p_{\rm tol}) = (1-p_{\rm tol})\ketbra{\psi} +{p_{\rm tol}} \eins/ {2^n}$}
is still detected as entangled \cite{whitetol}.
Table~\ref{tab:noiserob} shows the white noise tolerances of our criterion, 
compared with the most robust criteria so far.

\begin{table}[t]
\begin{tabular}{|c|c|c|c|}
\hline
state & \multicolumn{2}{c|}{white noise tolerances $p_{\rm tol}$} \\
\hline
& fully decomposable & before\\
\hline
$\ket{GHZ_3}^{\star}$ & $0.571$ & $0.571$ \cite{guehneseevinck} \\ 
$\ket{GHZ_4}^{\star}$ & $0.533$ & $0.533$ \cite{guehneseevinck} \\ 
$\ket{W_3}^{\star}$& $0.521$ & $0.421$ \cite{guehneseevinck}\\ 
$\ket{W_4}$& $0.526$ & $0.444$ \cite{guehneseevinck}\\ 
$\ket{Cl_4}^{\star}$& $0.615$ & $0.533$ \cite{tokunaga}\\ 
$\ket{D_{2,4}}$ & $0.539$ & $0.471$ \cite{huberdicke}\\ 
$\ket{\Psi_{S,4}}$& $0.553$ & $0.317$ \cite{toolboxpaper}\\ 
\hline
\end{tabular}
\caption{\label{tab:noiserob}
White noise tolerances of the fully decomposable witnesses obtained by the SDP 
of Eq.~(\ref{eq:sdp1}) compared with the corresponding tolerances of the most robust criteria 
known so far. For the states marked by $^{\star}$, we verified that adding more 
white noise than what is tolerated by Eq.~(\ref{eq:sdp1}) results in a biseparable 
state, so the values are optimal.}
\end{table}

Strikingly, the tolerances of the witnesses obtained by our SDP 
are significantly higher than previous ones. For the GHZ and the W state of three 
qubits and the GHZ and the linear cluster state of four qubits, we even obtain the best white 
noise tolerance possible, since we are able to show that for a larger amount of 
white noise the state becomes biseparable (see Appendix C). This shows that our 
criterion is indeed optimal for these cases.

To show that the criterion of Eq.~(\ref{eq:sdp1}) works well for a restricted 
set of observables, we consider the four-qubit Dicke state with two excitations 
$\ket{D_{2,4}}$ \cite{states}. For this state, the SDP yields a witness $W_{D}$ 
(see Appendix D) that consists of the observables $\OO = \lbrace X^{\otimes 4},  
Y^{\otimes 4},  Z^{\otimes 4},  X_1 X_2 Y_3 Y_4, X_1 X_2 Z_3 Z_4, Y_1 Y_2 Z_3 Z_4\rbrace$, 
their distinct permutations and other observables that can be measured in the same run. 
For example, a local measurement of $X_1 X_2 X_3 X_4$ yields knowledge of the expectation value
of $X_1 X_2 \eins_3 X_4$. The SDP finds a witness consisting of $O_1 =X^{\otimes 4}$, $O_2 =Y^{\otimes 4}$ and observables obtained by replacing some Paulis by the identity. Already with these observables, the  white noise tolerance is  $p_{\rm tol}^{(2)} \approx 0.29495$. We can proceed 
in this way and use additional observables $O_i$ from the set $\OO$ --- including their permutations and observables obtained by replacing Pauli operators by $\eins$--- to produce strictly stronger witnesses $W_{D}^{(i)}$. Their white noise tolerances $p_{\rm tol}^{(i)}$ are $p_{\rm tol}^{(3)} \approx 0.38379$, 
$p_{\rm tol}^{(4)} \approx 0.38383$, $p_{\rm tol}^{(5)} \approx 0.45200$ and finally $p_{\rm tol}^{(6)} \approx 0.53914$ as in Table \ref{tab:noiserob}, since $W_{D} = W_{D}^{(6)}$.

Third, we compute a lower bound
on the volume of genuinely multipartite entangled states. We created samples 
of $10^4$ random mixed three-qubit states uniformly distributed 
with respect to the Hilbert-Schmidt distance (or the Bures distance) and check whether they 
are genuinely multipartite entangled. 
6.28 \%  (Bures: 10.32 \%)
were detected by fully decomposable and 0.44 \%  (Bures: 1.06 \%) by fully PPT 
witnesses. As expected, fully PPT witnesses detect fewer states. 

While the problem can still be tackled numerically for six or seven qubits, in recent experiments up to 14 ions have been coherently manipulated \cite{blatt}. Therefore, we study analytical witnesses which can be generalized to an arbitrary number of qubits in the following.

{\it Analytical results ---} A fully decomposable witness for the four-qubit linear cluster state $\ket{Cl_4}$ \cite{states} that is obtained by the SDP of Eq.~(\ref{eq:sdp1}) is given by 
\begin{align}
\label{eq:cl4witness}
W_{\rm Cl4} = \frac{1}{2}\eins - \ketbra{Cl_4} - \frac{1}{8} \left(\eins-g_1\right) \left(\eins-g_4\right)\:,
\end{align}
where $g_1 = Z_1 Z_2 \eins_3 \eins_4$ and $g_4 = \eins_1 \eins_2 Z_3 Z_4$ are two of the generators of the cluster state's so-called stabilizer group. This witness detects more states than the usual projector witness $W_{\rm proj} = \frac{1}{2}\eins - \ketbra{Cl_n}$, since $W_{\rm Cl4}$ is obtained from $W_{\rm proj}$ by subtracting a positive operator $P_{+}$. For $n$ qubits, the generators are, after a local basis change, $g_1 = X_1 Z_2$, $g_i= Z_{i-1} X_i Z_{i+1}$ for $1 < i < n$ and $g_n = Z_{n-1} X_n$. Then, the n-qubit linear cluster state is defined by $\ketbra{Cl_n} = 2^{-n} \prod_{i=1}^{n} \left( \eins+g_i \right)$. The construction of the four-qubit cluster state witness can be generalized to an arbitrary number of qubits (see Appendix E). For seven qubits, e.g., a witness is given by
\vskip -8 pt

\begin{align}
\label{eq:cl7witness}
W_{\rm Cl7} = \: &\frac{1}{2}\eins - \ketbra{Cl_7} - \frac{1}{16} \left[\left(\eins-g_1\right)\left(\eins-g_4\right)\left(\eins-g_7\right) \right. \nonumber \\
& + \left(\eins+g_1\right)\left(\eins-g_4\right)\left(\eins-g_7\right) \nonumber \\
& +\left(\eins-g_1\right)\left(\eins+g_4\right)\left(\eins-g_7\right) \nonumber \\
& \left. +\left(\eins-g_1\right)\left(\eins-g_4\right)\left(\eins+g_7\right)\right] \:.
\end{align}
For the case of $n$ qubits, the white noise tolerance is 
\be
\label{eq:convergence}
p_{\rm tol} = \left(1-2^{-n+1}+(k+1) 2^{-k}\right)^{-1} \xrightarrow{n \rightarrow \infty} 1
\ee
where $k = \lfloor \frac{n+2}{3} \rfloor$. This result is remarkable for several reasons. First, $W_{\rm Cln}$ is the first example of a witness for genuine multipartite entanglement so far whose white noise tolerance converges to one for an increasing number of qubits. Thus, the volume of the largest ball inside the biseparable states around the totally mixed state approaches zero. 
A similar scaling behavior of the entanglement in the cluster state has been found in Ref. \cite{clusterscale}. Note that, however, they considered full separability and not genuine multipartite entanglement. For full separability, this scaling behavior is not surprising, since it is known that the largest ball of fully separable states around the totally mixed states shrinks with increasing particle number \cite{fullsepball}. Moreover, the white noise tolerance of Eq.~(\ref{eq:convergence}) corresponds to a required fidelity $F_{\rm req} \approx 1- p_{\rm tol} \approx k 2^{-k}$ for large $n$ and therefore decreases exponentially fast with growing $n$. In contrast, the fidelity needed to detect entanglement using $W_{\rm proj}$ equals one half, independent of the particle number. Interestingly, this exponential improvement comes with very low experimental costs, since the additional term $P_{+}$ can be measured with only one experimental setting. Finally, note that our construction induces a similar construction for the 2D cluster state.

{\it Discussion ---} In this Letter, we presented an easily implementable criterion for genuine multipartite entanglement. We demonstrated its high robustness, connected it to entanglement measures and provided powerful witnesses for an arbitrary number of qubits.

Due to its versatility, the criterion can be used to characterize the entanglement in various physical systems, e.g. in ground states of spin models undergoing a quantum phase transition. Moreover, it is a promising tool to study multipartite entanglement in continuous-variable systems. Finally, we believe that, as an easy-to-use scheme, it will be valuable for the analysis of experimental data that do not constitute a whole tomography.

We thank M. Kleinmann, T. Monz, S. Niekamp, A.~Osterloh, M. Piani and G. T\'oth for discussions and acknowledge support 
by the FWF (START Prize and SFB FOQUS).

\bibliographystyle{apsrev}

\section{Appendix}

This appendix consists of five sections. In Appendix A, we prove the properties (i) to (iv) of the entanglement monotone given in the main text. Appendix B gives an example for another superset that can be used to approximate the set of states which are separable with respect to a fixed bipartition. In the third section of the Appendix, we prove the optimality of some white noise tolerances given in Table I in the main text. Afterwards, we write down a fully decomposable witness for the Dicke state with two excitations in Appendix D. In the main text, this witness was used to illustrate the case in which one considers only witnesses that consist of a restricted set of observables. Finally, in Appendix E we provide a general theory for witnesses of linear cluster states. Examples for these witnesses have been given in the main text.

\subsection{Appendix A}
In this appendix, we introduce an entanglement monotone for genuine multipartite entanglement that is motivated by the notion of fully decomposable entanglement witnesses. Let us point out that this defined quantity equals the negativity for the bipartite case \cite{negativityapp}, hence it can be considered as an extension of the negativity to the multipartite case. 

\begin{proposition} For a generic multipartite state $\varrho$, consider 
\begin{align}
&N(\varrho)= -\min_{W \in \mathcal{W}} \trace(\varrho W), \\
&\mathcal{W}= \left\{ W \big| \; \forall M: \exists\; 0 \leq P_M, Q_M \leq \mathbbm{1}: W=P_M + Q_M^{T_M} \right\}\!, \nonumber
\end{align}
where $M$ stands for a possible partition of the subsystems. Then $N(\varrho)$ has the following properties:
\begin{itemize}
\item $N(\varrho^{\rm bs})=0$ for all biseparable states $\varrho^{\rm bs}$.
\item $N[\Lambda_{\rm LOCC}(\varrho)]\leq N(\varrho)$ for all full LOCC operations.
\item $N(U_{\rm loc} \varrho U^\dag_{\rm loc})=N(\varrho)$ for local basis changes $U_{\rm loc}$.
\item $N(\sum_i p_i \varrho_i) \leq \sum_i p_i N(\varrho_i)$ holds for all convex combinations $\sum_i p_i \varrho_i$.
\end{itemize}
\end{proposition}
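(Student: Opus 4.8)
\emph{Proof strategy.---} The plan is to dispatch the two easy properties directly from the definition and to reduce the two nontrivial ones to a single algebraic identity for the partial transpose. First, $N$ is well defined because $\mathcal W$ is closed and, by the bounds $0\le P_M,Q_M\le\mathbbm{1}$, bounded, hence compact, so the minimum is attained. Taking $P_M=Q_M=0$ for every $M$ shows $0\in\mathcal W$, so $N(\varrho)\ge 0$ for all $\varrho$. For the first property, every $W\in\mathcal W$ is by construction a fully decomposable witness and therefore, exactly as in the Observation, nonnegative on every PPT mixture; since a biseparable $\varrho^{\rm bs}$ is a PPT mixture, $\trace(\varrho^{\rm bs}W)\ge 0$ for all $W\in\mathcal W$, so $N(\varrho^{\rm bs})\le 0$ and hence $N(\varrho^{\rm bs})=0$. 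Convexity is likewise immediate: $N(\varrho)=-\min_{W\in\mathcal W}\trace(\varrho W)$ is a pointwise supremum of linear functionals of $\varrho$; equivalently $\min_W\sum_i p_i\trace(\varrho_iW)\ge\sum_i p_i\min_W\trace(\varrho_iW)$.

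For the last two properties I would record the transformation rule
\begin{equation}
\left[(A\otimes B)\,X\,(C\otimes D)\right]^{T_M}=(C^{T}\otimes B)\,X^{T_M}\,(A^{T}\otimes D),
\end{equation}
valid for operators on $\HH_M\otimes\HH_{\overline M}$ and checked on basis elements $\ket i\bra j\otimes\ket k\bra l$. Local-unitary invariance follows at once: for $U_{\rm loc}=U_M\otimes U_{\overline M}$ and $W=P_M+Q_M^{T_M}\in\mathcal W$, conjugation sends $P_M\mapsto U_{\rm loc}^{\dagger}P_MU_{\rm loc}$, still in $[0,\mathbbm{1}]$, and by the rule above $Q_M^{T_M}\mapsto\widetilde Q_M^{\,T_M}$ with $\widetilde Q_M=(U_M^{T}\otimes U_{\overline M}^{\dagger})\,Q_M\,(U_M^{T}\otimes U_{\overline M}^{\dagger})^{\dagger}$, which is positive and $\le\mathbbm{1}$. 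Thus $\mathcal W$ is invariant under $W\mapsto U_{\rm loc}^{\dagger}WU_{\rm loc}$, and since $\trace(U_{\rm loc}\varrho U_{\rm loc}^{\dagger}W)=\trace(\varrho\,U_{\rm loc}^{\dagger}WU_{\rm loc})$ the minimum, hence $N$, is unchanged.

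The substantial part is LOCC monotonicity. I would decompose a deterministic LOCC map into elementary moves: (a) appending or discarding a local ancilla, (b) local unitaries, (c) a one-party generalized measurement with classical recording, and (d) forgetting the classical record. Step (b) is done above, (a) is routine (a witness in $\mathcal W$ for $\varrho$ tensored with $\mathbbm{1}$ on the ancilla side again lies in $\mathcal W$; discarding follows from (c)), and (d) is exactly convexity. For (c), let party $A$ apply Kraus operators $\{K_i\}$ with $\sum_i K_i^{\dagger}K_i=\mathbbm{1}$, producing $\varrho_i=(K_i\otimes\mathbbm{1})\varrho(K_i^{\dagger}\otimes\mathbbm{1})/p_i$ with probability $p_i$. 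Pick $W_i\in\mathcal W$ optimal for $\varrho_i$ and set $W:=\sum_i(K_i^{\dagger}\otimes\mathbbm{1})W_i(K_i\otimes\mathbbm{1})$, so that $\trace(\varrho W)=\sum_i p_i\trace(\varrho_iW_i)=-\sum_i p_iN(\varrho_i)$. Then $\sum_i p_iN(\varrho_i)\le N(\varrho)$ follows once $W\in\mathcal W$ is checked: for a bipartition $M$ with $A\subseteq M$, writing $W_i=P_M^{(i)}+(Q_M^{(i)})^{T_M}$, the transformation rule gives $W=P_M+Q_M^{T_M}$ with $P_M=\sum_i(K_i^{\dagger}\otimes\mathbbm{1})P_M^{(i)}(K_i\otimes\mathbbm{1})$ and $Q_M=\sum_i(K_i^{T}\otimes\mathbbm{1})Q_M^{(i)}(K_i^{*}\otimes\mathbbm{1})$, both positive semidefinite and $\le\mathbbm{1}$ (using $0\le P_M^{(i)},Q_M^{(i)}\le\mathbbm{1}$ together with $\sum_iK_i^{\dagger}K_i=\mathbbm{1}$ and its complex conjugate $\sum_iK_i^{T}K_i^{*}=\mathbbm{1}$); if instead $A\subseteq\overline M$ the same computation applies with the $K_i$ acting on the untransposed factor and no transposes or conjugates needed. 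Combining (c) with convexity yields $N(\sum_i p_i\varrho_i)\le N(\varrho)$ for each step, and iterating along the protocol gives $N[\Lambda_{\rm LOCC}(\varrho)]\le N(\varrho)$.

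The main obstacle is precisely step (c): one must ensure that the \emph{single} operator $W$ admits a valid decomposition $P_M+Q_M^{T_M}$ with $0\le P_M,Q_M\le\mathbbm{1}$ simultaneously for every bipartition $M$, carefully tracking whether the measured party lies in $M$ or in $\overline M$; this is exactly where the normalization $0\le P_M,Q_M\le\mathbbm{1}$ (rather than $\trace W=1$) and the trace-preservation identity $\sum_iK_i^{\dagger}K_i=\mathbbm{1}$ are both indispensable. The reduction of a general LOCC protocol to the moves (a)--(d) is standard and should merely be stated.
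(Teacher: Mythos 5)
Your proof is correct, and its mathematical core coincides with the paper's: the decisive identity in both is that conjugating a partially transposed operator by a local Kraus operator can be pulled inside the transposition at the cost of complex-conjugating the Kraus operators on the transposed side, $(K^{\dagger}\otimes\mathbbm{1})\,Q^{T_M}\,(K\otimes\mathbbm{1})=[(K^{T}\otimes\mathbbm{1})\,Q\,(K^{*}\otimes\mathbbm{1})]^{T_M}$, combined with $\sum_i K_i^{\dagger}K_i=\mathbbm{1}$ to preserve the normalization $0\leq P_M,Q_M\leq\mathbbm{1}$. The packaging differs, however. The paper does not decompose the protocol at all: it proves weak monotonicity $N[\Lambda(\varrho)]\leq N(\varrho)$ in one stroke for every trace-preserving separable operation $\Lambda(\varrho)=\sum_i K_i\varrho K_i^{\dagger}$ with product Kraus operators $K_i=A_i\otimes\cdots\otimes F_i$ (a superset of LOCC), by showing that the unital adjoint map $\Lambda^{\dagger}$ sends $\mathcal{W}$ into $\mathcal{W}$; local-unitary invariance is then deduced from invertibility rather than proved directly as you do. Your route instead splits LOCC into elementary moves and establishes the stronger \emph{selective} monotonicity $\sum_i p_i N(\varrho_i)\leq N(\varrho)$ for a one-party instrument, which is what makes the assembly of adaptive, outcome-dependent rounds transparent; the paper's version buys brevity and generality over the class of operations, yours buys the stronger average-monotonicity statement and an explicit treatment of conditioning. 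Two small points to fix in a write-up: in your step (a) the parenthetical argument ($W\mapsto W\otimes\mathbbm{1}_{\rm anc}$ stays in $\mathcal{W}$) establishes $N(\varrho\otimes\ketbra{0})\geq N(\varrho)$, which is the direction you do not need; for monotonicity you should instead restrict an optimal witness of the enlarged system via $W\mapsto(\mathbbm{1}\otimes\bra{0})W'(\mathbbm{1}\otimes\ket{0})$ (equally routine, and together the two directions give equality). And you should state explicitly that the measuring party lies wholly in $M$ or wholly in $\overline{M}$ for every bipartition, which is what makes your case split exhaustive.
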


\begin{proof}
The first statement follows directly from the fact that any fully decomposable witness can only detect genuine multipartite entanglement, hence the expectation value satisfies $\trace(\varrho^{\rm bs} W) \geq 0$ and vanishes for $W=0$. %Invariance under local basis changes follows since the set of decomposable witnesses $\mathcal{W}$ remains invariant under this operations.

For the second property we effectively show $N[\Lambda(\varrho)]\leq N(\varrho)$ for all trace-preserving, completely positive operations $\Lambda(\varrho)=\sum_i K_i \varrho K^\dag_i$ with $\sum_i K^\dag_i K_i = \mathbbm{1}$ that admit a fully separable operator-sum representation, which means that each operator $K_i=A_i \otimes B_i \otimes \dots \otimes F_i$ has a tensor product form. This set of operations defines a superset to the set of full LOCC operations, so we verify the above property for an even larger set of possible operations \cite{bennett}. Let us point out that for each operation $\Lambda$, there exists an adjoint operation $\Lambda^\dag(Y)=\sum_i K_i^\dag Y K_i$, that satisfies the identity $\trace[\Lambda(X) Y]=\trace[X \Lambda^\dag(Y)]$ for all linear operators $X,Y$. The trace-preserving condition for $\Lambda$ translates to a unital condition for the adjoint map $\Lambda^\dag(\mathbbm{1})=\mathbbm{1}$. 

Let us first prove the following statement: For each trace-preserving, separable operation $\Lambda$ and for any decomposable operator $W$ the observable $\Lambda^\dag(W)$ is decomposable as well. Suppose that $W=P + Q^{T_M}$ is an appropriate decomposition~\cite{subscripts} with respect to a chosen partition $M$. Because of linearity we obtain $\Lambda^\dag(W)=\Lambda^\dag(P) + \Lambda^\dag(Q^{T_M})$. First, we want to check ``normalization'' $0\leq \Lambda^\dag(P) \leq \mathbbm{1}$. Complete positivity provides $\Lambda^\dag(P)\geq~0$ since $P\geq~0$ is positive semidefinite itself. If one applies the adjoint map to $(\mathbbm{1}-P)\geq~0$ and employs the unital condition one obtains 
\begin{equation}
\Lambda^\dag(\mathbbm{1}-P)=\Lambda^\dag(\mathbbm{1})-\Lambda^\dag(P)=\mathbbm{1}-\Lambda^\dag(P)\geq 0,
\end{equation}
hence the upper bound $\Lambda^\dag(P)\leq~\mathbbm{1}$ holds as well. We can apply the same argument to $\Lambda^\dag(Q)$ if we can fulfil the identity $\Lambda^\dag(Q^{T_M})=[\tilde\Lambda^\dag(Q)]^{T_M}$ with $\tilde{\Lambda}$ being completely positive and unital. 
Using the assumed tensor product structure of each operator $K_i$ it is straightforward to deduce the Kraus operators of the liner map $\tilde \Lambda$ satisfying this identity. These operators $\tilde K_i = \tilde A_i \otimes \tilde B_i \otimes \dots \otimes \tilde F_i$ are given by $\tilde A_i = A_i$ if $A \not \in M$ and $\tilde A_i = A^*_i$ if $A \in M$, and similar relations for all other operators. Let us point out that this is the only step where one explicitly needs the separable operator structure. 

Via this statement one finally obtains     
\begin{eqnarray}
N[\Lambda(\varrho)] &=& - \min_{W \in \mathcal{W}} \trace[\Lambda(\varrho) W] = - \min_{W \in \mathcal{W}} \trace[\varrho \Lambda^\dag(W)] \nonumber \\ 
&\leq& - \min_{W \in \mathcal{W}} \trace[\varrho  W] = N[\varrho].
\end{eqnarray}
The inequality holds since $\Lambda^\dag(W)$ is decomposable again, hence the optimization over the complete set of decomposable entanglement witnesses can only produce a lower (negative) expectation value. 

Invariance under local basis changes is a direct consequence of the previous property. Since any local basis change $U_{\rm loc}$ represents an invertible full LOCC operation, one obtains 
\begin{equation}
\nonumber
N(\varrho) \geq N(U_{\rm loc} \varrho U^\dag_{\rm loc}) \geq N[U_{\rm loc}^\dag (U_{\rm loc} \varrho U^\dag_{\rm loc}) U_{\rm loc}]= N(\varrho).
\end{equation}
Thus a local basis does not change the value of $N(\varrho)$.

The convexity statement 
\begin{align}
N\left(\sum_i p_i \varrho_i\right)= &-\min_{W} \sum_i p_i \trace(\varrho_i W) \nonumber \\
\leq &\sum_i p_i \left[ - \min_W \trace(\varrho_i W) \right] \nonumber \\
= & \sum_i p_i N(\varrho_i),
\end{align}
follows from linearity of the trace and the fact that if one performs independent optimizations then the obtained expectation value can only be smaller. This completes the proof.
\end{proof}

\subsection{Appendix B}

Here, we would like to give another example of a superset that one can
use to approximate the separable states of a given bipartition. 
For the bipartition $A|BC$, one can consider states that
 possess a symmetric extension to $k$~copies of system $A$, 
cf.~Refs.~\cite{dohertyapp}. This means that the given state 
$\varrho_{A|BC}$ can be written as the reduced state of a multipartite 
state $\varrho_{A_1\dots A_k|BC}$ that is invariant under all possible 
permutations of the copied subsystems.
Every separable state necessarily satisfies this 
extension condition for any number of copies and we denote states 
of this class by $\varrho_{A|BC}^{\rm sym_k}$. 
Consequently, we ask whether a three-particle state can be decomposed as
\begin{align}
\varrho &= p_1 \vr_{A|BC}^{\rm sym_k} + p_2 \vr_{B|AC}^{\rm sym_k} + p_3 \vr_{C|AB}^{\rm sym_k}.
\end{align}
Any biseparable state can be written in this way, hence if this expansion fails, genuine multipartite entanglement is detected. This approach has appealing properties: With increasing number of copies, these supersets converge to the set of separable states \cite{dohertyapp}. Moreover, it is again possible to characterize such decompositions using entanglement witnesses that can be tackled via SDP. These witnesses are such that for all possible bipartitions $M$, the operator $\WW$  must be a bipartite entanglement witness for the case of $k$ symmetric extensions as given in Ref.~\cite{dohertyapp}.

\subsection{Appendix C}

In this appendix, we prove that our criterion is optimal for three-qubit GHZ and W states
and for four-qubit GHZ and cluster states (cf. marked states of Table I).

First, for the GHZ states mixed with white noise, the noise thresholds at which the states 
become biseparable were already derived in Ref.~\cite{guehneseevinckapp} and the values 
of $p_{\rm tol}$ for our criterion coincide with these values.

Second, for the three-qubit W state $\ket{W_3}$ we introduce 
a variation of our relaxation idea, in order to show that for 
a white noise contribution of $p \approx 0.521$ the corresponding 
state is biseparable.

Consider the tripartite case. Instead of supersets to approximate 
separable states for bipartition $A|BC$, we now consider a strictly 
smaller set of states, \ie, any state of this subset must necessarily 
be separable. States of this restricted class are denoted as 
$\tilde \varrho_{A|BC}$, and similar for other bipartitions. 
If a given state has a decomposition  
\begin{align}
\label{eq:biseprelax}
\varrho &= p_1 \tilde \vr_{A|BC} + p_2 \tilde \vr_{B|AC} + p_3 \tilde \vr_{C|AB},
\end{align}
\noindent then it is clearly biseparable. The main difficulty 
is to find operational subset approximations, though results 
have been obtained along this direction recently~\cite{spedalieri,navascues09,thiang10}. 
For the three-qubit case we employ an approximation that 
exploits the low dimensionality of the problem. Instead 
of all separable states of a $2 \otimes 4$ system we 
consider only separable states $\tilde \varrho_{A|BC}$ 
that are supported on the symmetric subspace of system 
$BC$. Since $\text{Sym}(\mathcal{H}_{BC}) \simeq \mathbbm{C}^3$, 
one effectively has a $2 \otimes 3$ system for which the PPT 
condition $\tilde \varrho_{A|BC}^{T_A} \geq 0$ becomes necessary and sufficient for separability~\cite{entanglement_witness}. Similar definitions are 
used for the other bipartitions. The search for an explicit 
decomposition as given by Eq.~(\ref{eq:biseprelax}) can be 
cast into the form of a SDP again. 

The solution for $\varrho(p)=(1-p)\ket{W_3}\bra{W_3}+p\mathbbm{1}/8$ is as follows: The separable states for $B|AC$ and $C|AB$ are the same as for the bipartition $A|BC$ but with appropriate permutations of the subsystems. All probabilities equal $p_i=1/3$. Thus, it is left to characterize $\tilde \vr_{A|BC}$. The corresponding eigenspectrum is given by $\lambda_1=\lambda_2=p/8$, $\lambda_3=3p/8$ and $\lambda_4=1-5p/8$ with respect to the eigenstates $\ket{\chi_1}=\ket{000}$, $\ket{\chi_2}=\ket{111}$,
\begin{eqnarray}
\ket{\chi_3}&=&\left( 2 \sqrt{2} \ket{1}\otimes \ket{\psi^+} - \ket{011} \right)/3,\\
\ket{\chi_4}&=&a \ket{0}\otimes \ket{\psi^+} + \sqrt{1-a^2} \ket{100},
\end{eqnarray}
with the abbreviation $\ket{\psi^+}=(\ket{01}+\ket{10})/\sqrt{2}$. The parameter $a \in \mathbbm{R}$ is determined by the linear condition given by Eq.~(\ref{eq:biseprelax}) and will be the root of a non-linear function. The only remaining condition is $\tilde \varrho_{A|BC}^{T_A} \geq 0$ which is satisfied if $p \geq \bar p$ with a critical white noise level given by
\begin{equation}
\bar p =\frac{1}{382}\left[367-71\sqrt{3}-\sqrt{2894\sqrt{3}-2988}\right] \approx 0.52102.
\end{equation} 
At this white noise level one has $a \approx 0.98716$. The noise level $\bar p$ coincides with 
the maximal $p_{\rm tol}$ for our criterion.

Finally, for the four-qubit cluster state $\ket{Cl_4}$ we used the algorithm of 
Ref.~\cite{barreiro}, that can be used to prove different forms of multipartite 
separability. We found that a state with a noise contribution of $p=0.616$ is biseparable, 
while the witness $W_{\rm Cl4}$ of Eq.~(6) in the main text detects it for $p<8/13 \approx 0.6154$ as genuine 
multipartite entangled.

\subsection{Appendix D}
The fully decomposable witness for the Dicke state $\ket{D_{2,4}}$ which was used in the main text to demonstrate entanglement detection by measuring a restricted set of observables is given by
\begin{widetext}
\begin{align}
W_{D}\! = \! \: \frac{1}{16} & \left[ \eins + \alpha_1 X^{\otimes 4} + \alpha_1 Y^{\otimes 4} + \alpha_2 Z^{\otimes 4} + \alpha_3 \left( X_1 X_2 Y_3 Y_4 + {\rm perms} \right)+\alpha_4 \left( Z_1 Z_2 Y_3 Y_4 + {\rm perms} \right) \right.  \\
& \left. + \alpha_4 \left( Z_1 Z_2 X_3 X_4 + {\rm perms} \right) \!+\alpha_5 \left( X_1 X_2 \eins_3 \eins_4 + {\rm perms} \right)\! +\alpha_5 \left( Y_1 Y_2 \eins_3 \eins_4 + {\rm perms} \right)\!+\alpha_6 \left( Z_1 Z_2 \eins_3 \eins_4 + {\rm perms} \right)\right]\: \!. \nonumber
\end{align}
Here, $X_1 X_2 Y_3 Y_4 + {\rm perms}$ is the sum over all distinct permutations of $X_1 X_2 Y_3 Y_4$. Moreover, $\alpha_1 = 0.014,$ $\alpha_2  = -0.095,$ $\alpha_3  = 0.0046,$ $\alpha_4  = 0.16,$ $\alpha_5 = -0.14,$ $\alpha_6 = -0.15$.
\end{widetext}

\subsection{Appendix E}
Here, we prove the generalized n-qubit version $W_{\rm Cln}$ of the cluster state witnesses given in the main text to be fully decomposable witnesses. For four and seven qubits, $W_{\rm Cln}$ reduces to forms of Eqs. (6) and (8) of the main text, respectively. First, let us briefly introduce the notation that we use.\\
Consider an arbitrary graph $G = (V,E)$ that is defined by a set $V$ of vertices which correspond to qubits and a set $E$ of edges that connect some of these vertices. \\
Then, by $\NN(i)$ we denote the {\it neighborhood of qubit $i$}, \ie, the set of all vertices that are connected to $i$ by an edge. Moreover, we define $\NNN(i) = \NN(i) \cup \lbrace i \rbrace$.\\
The commuting operators defined by
\be
\label{eq:generators}
g_i = X_i \prod_{k \in \NN(i)} Z_k, \: i = 1, \dots ,n
\ee
are the generators of the so-called {\it stabilizer group}, \ie, $\lbrace S_1 , \dots , S_{2^n} \rbrace = \langle g_1, \dots, g_n\rangle$. In other words, every stabilizer $S_i$ can be written as a product of some (or all) of these generators $g_i$.\\
The {\it graph state $\ket{G}$} associated to the given graph $G$ is then uniquely defined by
\be
g_i \ket{G} = \ket{G}, \; \forall \: i = 1, \dots ,n \:.
\ee
One can also associate an orthonormal basis to a graph $G$. The elements $\ket{a_1 \dots a_n}_G$ of this so-called {\it graph state basis} are defined by
\be
g_i \ket{a_1 \dots a_n}_G = (-1)^{a_i} \ket{a_1 \dots a_n}_G, \; \forall \: i = 1, \dots ,n \:.
\ee
Consequently, $\ket{G} = \ket{0 \dots 0}_G$. Also, note that projectors on these vectors can be written as
\be
\label{eq:projectorform}
_G \ketbra{a_1 \dots a_n}_G = \prod \limits_{i = 1}^{n} \frac{(-1)^{a_i} g_i+\eins}{2} \:.
\ee
For better readability, we will drop the subscript $G$ in the following and write $\ket{a_1 \dots a_n}_G = \ket{\vec{a}}$. Thus, except for a single exception that we will explicitly mention, all vectors in the following have to be understood in the graph state basis. However, it is important to note that all partial transpositions will be done w.r.t the computational basis.\\
A {\it partition} $M$ is a subset of the set of all qubits, while we refer to the set made up of partition $M$ and its complement $\comp{M}$ as {\it bipartition} $M\vert \comp{M}$.\\
With the notation clarified, we first prove three lemmas to prepare the proof that the operator $W_{Cl}$ presented in the paper is a fully decomposable witness. The first of these lemmas starts from two orthogonal graph state basis vectors and shows which kind of partial transpositions one can apply to one of these vectors such that the two resulting vectors are orthogonal. Lemma \ref{lem:rule2} supplies an upper bound on the eigenvalues of any partially transposed state in terms of the state's Schmidt coefficients. Finally, Lemma \ref{lem:rule3} provides some kind of construction manual. Given a qubit and starting from a graph state basis vector, it tells us how we can construct expressions out of this basis vector which are invariant under a partial transposition on the given qubit.\\
These lemmas will then be used in Proposition \ref{lem:witness} to prove that the presented operator $W_{\rm Cln}$ is a fully decomposable witness.

\begin{lemma}
Given a graph $G = (V,E)$ of $n$ qubits and an arbitrary bipartition $M\vert \comp{M}$ of these qubits. Let $\ket{\vec{a}}$ and $\ket{\vec{b}}$ be two arbitrary states in the associated graph state basis. Let $\UU_M$ be the set of all qubits that lie in the same partition as their neighbors, \ie, $\UU_M = \lbrace  i \:|\: \NNN(i) \subseteq M \: \vee \: \NNN(i) \subseteq \comp{M} \rbrace$. If there is a qubit $i \in \UU_M$, s.t. $b_i \neq a_i$, then
\label{lem:rule1}
\be
\bra{\vec{b}} \left( \ketbra{\vec{a}}\right)^{T_M} \ket{\vec{b}} = 0
\ee
\end{lemma}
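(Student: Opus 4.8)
\emph{Proof idea.} The plan is to expand the rank-one projector $\ketbra{\vec a}$ over the stabilizer group, determine the sign it acquires under the computational-basis partial transpose $T_M$, and finish with a parity cancellation. Using the projector form (\ref{eq:projectorform}) one writes
\begin{equation}
\ketbra{\vec a}=\frac{1}{2^n}\sum_{\vec s\in\{0,1\}^n}(-1)^{\vec a\cdot\vec s}\,g^{\vec s},\qquad g^{\vec s}:=\prod_{j=1}^{n}g_j^{s_j},
\end{equation}
the sum running over all stabilizer elements (the $g_j$ commute, so $g^{\vec s}$ is well defined). Each $g^{\vec s}$ is Hermitian and squares to $\eins$, hence equals $\pm 1$ times a tensor product of single-qubit Paulis; collecting the $X$'s and $Z$'s in its definition shows that it carries a $Y$ on qubit $m$ exactly when $m\in\mathrm{supp}(\vec s)$ and $|\NN(m)\cap\mathrm{supp}(\vec s)|$ is odd. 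Since $X^{T}=X$, $Z^{T}=Z$ and $Y^{T}=-Y$, the partial transpose acts by $(g^{\vec s})^{T_M}=\epsilon_M(\vec s)\,g^{\vec s}$ with $\epsilon_M(\vec s)=(-1)^{y_M(\vec s)}$, where $y_M(\vec s)$ is the number of qubits $m\in M$ such that $m\in\mathrm{supp}(\vec s)$ and $|\NN(m)\cap\mathrm{supp}(\vec s)|$ is odd. Combining this with $g_j\ket{\vec b}=(-1)^{b_j}\ket{\vec b}$, so that $\bra{\vec b}g^{\vec s}\ket{\vec b}=(-1)^{\vec b\cdot\vec s}$, yields the closed form
\begin{equation}
\bra{\vec b}\left(\ketbra{\vec a}\right)^{T_M}\ket{\vec b}=\frac{1}{2^n}\sum_{\vec s\in\{0,1\}^n}(-1)^{\vec c\cdot\vec s}\,\epsilon_M(\vec s),\qquad \vec c:=\vec a\oplus\vec b .
\end{equation}

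It remains to show that this sum vanishes whenever $c_i=1$ for some $i\in\UU_M$. I would pair each $\vec s$ with the vector obtained from it by flipping coordinate $i$: since $c_i=1$ this flips the phase $(-1)^{\vec c\cdot\vec s}$, so the two terms cancel as soon as $\epsilon_M$ is invariant under that flip. The lemma therefore reduces to the claim that, for $i\in\UU_M$, flipping whether $i$ lies in $\mathrm{supp}(\vec s)$ leaves the parity of $y_M(\vec s)$ unchanged.

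To prove this, note that flipping coordinate $i$ replaces $g^{\vec s}$ by $g^{\vec s}g_i$ (using $g_i^2=\eins$ and commutativity), and that $g_i=X_i\prod_{k\in\NN(i)}Z_k$ is supported only on $\NNN(i)$, so the $Y$-pattern of the Pauli string can change only at qubits in $\NNN(i)$. If $\NNN(i)\subseteq\comp M$, none of these qubits lies in $M$ and $y_M$ is literally unchanged. If $\NNN(i)\subseteq M$, a direct case check --- using that $g^{\vec s}$ carries $X$ or $Y$ on each qubit of $\mathrm{supp}(\vec s)$ and $\eins$ or $Z$ off it, with the alternative fixed by the parity of $|\NN(\cdot)\cap\mathrm{supp}(\vec s)|$ --- shows that multiplying by $g_i$ toggles the $Y$-status at qubit $i$ precisely when $|\NN(i)\cap\mathrm{supp}(\vec s)|$ is odd, and toggles it at exactly the $|\NN(i)\cap\mathrm{supp}(\vec s)|$ neighbors of $i$ that lie in $\mathrm{supp}(\vec s)$; the number of toggled $Y$-sites inside $M$ is thus $|\NN(i)\cap\mathrm{supp}(\vec s)|$, plus one more when that number is odd, hence even, so the parity of $y_M$ is unchanged. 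Since $\UU_M$ is exactly the set of qubits $i$ with $\NNN(i)\subseteq M$ or $\NNN(i)\subseteq\comp M$, both cases are covered, the pairing is sign-preserving, and the sum --- equal to $\bra{\vec b}\left(\ketbra{\vec a}\right)^{T_M}\ket{\vec b}$ --- vanishes. The one genuine obstacle is this parity bookkeeping: one must track carefully how the $Y$-pattern of $g^{\vec s}$ restricted to $M$ reacts to moving $i$ in or out of $\mathrm{supp}(\vec s)$, and check that the toggle at $i$ itself always matches, in parity, the number of toggles at its neighbors; everything else is formal manipulation of the stabilizer expansion.
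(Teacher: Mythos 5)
Your proof is correct and follows essentially the same route as the paper's: both expand $\ketbra{\vec a}$ over the stabilizer group, observe that $T_M$ acts on each stabilizer element $g^{\vec s}$ only by a sign $\epsilon_M(\vec s)$, show that this sign is insensitive to flipping $s_i$ for $i\in\UU_M$ (the paper's statement $\UU_M\cap\mathrm{supp}(f)=\emptyset$), and then cancel in pairs --- your character-sum pairing is the same computation as the paper's factoring of the $\UU_M$ projectors out of the transpose and invoking $\tfrac{g_i+\eins}{2}\tfrac{-g_i+\eins}{2}=0$. Your explicit parity bookkeeping for the $Y$-pattern on $\NNN(i)$ is a welcome filling-in of the step the paper dismisses with ``due to the explicit form of the $g_i$.''
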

\begin{proof}
Let $g_i, i=1 \dots n$ be the generators defined by Eq.~(\ref{eq:generators}). Since $X^T = X$, $Y^T = -Y$, $Z^T = Z$ and $\eins^T = \eins$, the partial transposition of a product of generators only changes the product's sign. Thus, we can describe the action of the partial transpose $T_M$ w.r.t partition $M$ on products of generators by
\be
\left(\prod \limits_{i=1}^{n} g_i^{x_i}\right)^{T_M} = (-1)^{f(\vec{x})} \left(\prod \limits_{i=1}^{n} g_i^{x_i}\right) \:,
\ee
where $x_i \in \lbrace 0, 1 \rbrace$. Here, $f$ depends on $M$ and is a Boolean function defined by
\begin{align}
f: \lbrace 0,1 \rbrace^{n} &\rightarrow \lbrace 0,1 \rbrace\\
\vec{x} &\mapsto f(\vec{x}) =
\begin{cases}
0, & \mbox{if} \: \left(\prod \limits_{i=1}^{n} g_i^{x_i}\right)^{T_M} = \prod \limits_{i=1}^{n} g_i^{x_i}\\
1, & \mbox{if} \: \left(\prod \limits_{i=1}^{n} g_i^{x_i}\right)^{T_M} = - \prod \limits_{i=1}^{n} g_i^{x_i}
\end{cases}
\nonumber
\:.
\end{align}
Note that the {\it support} ${\rm supp}(f)$ of a Boolean function contains the bits that the function depends on, \ie,
\begin{widetext}
\be
{\rm supp}(f) = \lbrace i\: \vert \: \exists \: \vec{x}, {\rm s.t.} f(x_1, \dots , x_i ,\dots , x_n) \neq f(x_1, \dots , x_i \oplus 1,\dots , x_n)  \rbrace \:.
\ee
Due to the explicit form of the $g_i$, flipping the value of $x_i$ cannot change $f(\vec{x})$, if $i \in \UU_M$. Therefore, $\UU_M \cap \: {\rm supp}(f) = \lbrace \rbrace$. For this reason, we can pull qubits in $\UU_M$ out of the partial transposition $T_M$ in the following way (using also Eq.~(\ref{eq:projectorform})).
\be
\bra{\vec{b}} \left( \ketbra{\vec{a}}\right)^{T_M} \ket{\vec{b}} = \:\trace \left[ \prod \limits_{j = 1}^{n} \frac{(-1)^{b_j} g_j+\eins}{2} \prod \limits_{i \in \UU_M} \frac{(-1)^{a_i} g_i+\eins}{2}\left( \prod \limits_{i \notin \UU_M} \frac{(-1)^{a_i} g_i+\eins}{2}\right)^{T_M}\right]
\ee
\end{widetext}
Since $\frac{g_i + \eins}{2} \frac{- g_i + \eins}{2} = 0$, the last expression vanishes if there is an $i \in \UU_M$, such that $b_i \neq a_i$.
\end{proof}
\begin{lemma}
\label{lem:rule2}
Given a state $\ket{\psi}$ and its Schmidt decomposition $\ket{\psi} = \sum_{i=1}^{d_1} \lambda_i \ket{\mu_i} \otimes \ket{\nu_i}$ with respect to some bipartition $M|\comp{M}$, where $\lambda_i \geq 0$, $d_1 =  {\rm dim}(M)$, $d_2 = {\rm dim}(\comp{M})$ and w.l.o.g. $d_1 \leq d_2$. Note that, here, $\ket{\mu_i} \otimes \ket{\nu_i}$ is not to be understood in a graph state basis. Then, for any state $\ket{\phi}$,
\be
\bra{\phi} \left( \ketbra{\psi} \right)^{T_M} \ket{\phi} \leq \max_{i} \lambda_i^2
\ee
\end{lemma}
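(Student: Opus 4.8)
The plan is to reduce the statement to a spectral bound: it suffices to show that the operator $R := \left(\ketbra{\psi}\right)^{T_M}$ satisfies $R \le \left(\max_i \lambda_i^2\right)\eins$ as an operator inequality, since then $\bra{\phi} R \ket{\phi} \le \lambda_{\max}(R) \le \max_i \lambda_i^2$ for every normalized $\ket{\phi}$. So the whole task is to diagonalize $R$ and check that each of its eigenvalues is at most $\max_i \lambda_i^2$.

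First I would fix the orthonormal product basis with respect to which the partial transposition is defined, say $\{\ket{e_k}\}$ on $\HH_M$ and $\{\ket{f_l}\}$ on $\HH_{\comp{M}}$, and expand the Schmidt vectors in it. Writing $\ketbra{\psi} = \sum_{ij}\lambda_i\lambda_j\,\ket{\mu_i}\bra{\mu_j}\otimes\ket{\nu_i}\bra{\nu_j}$ and using $\big(\ket{\mu_i}\bra{\mu_j}\big)^{T} = \ket{\mu_j^*}\bra{\mu_i^*}$, where $\ket{\mu^*}$ denotes complex conjugation of the coefficients in the chosen basis, one obtains
\be
R = \sum_{i,j}\lambda_i\lambda_j\;\ket{\mu_j^*}\bra{\mu_i^*}\otimes\ket{\nu_i}\bra{\nu_j}\,.
\ee
Conjugation preserves orthonormality, $\langle\mu_i^*|\mu_j^*\rangle = \overline{\langle\mu_i|\mu_j\rangle} = \delta_{ij}$, and the assumption $d_1 \le d_2$ guarantees there is room for the $d_1$ vectors $\ket{\nu_i}$; hence the $d_1^2$ vectors $\ket{\mu_k^*}\otimes\ket{\nu_l}$ form an orthonormal set, $R$ annihilates its orthogonal complement, and on this subspace a direct computation gives $R\big(\ket{\mu_k^*}\otimes\ket{\nu_l}\big) = \lambda_k\lambda_l\,\ket{\mu_l^*}\otimes\ket{\nu_k}$.

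Next I would split this action into invariant blocks. The "diagonal" vectors $\ket{\mu_k^*}\otimes\ket{\nu_k}$ are eigenvectors of $R$ with eigenvalue $\lambda_k^2$, while for each unordered pair $k \ne l$ the two-dimensional span of $\{\ket{\mu_k^*}\otimes\ket{\nu_l},\,\ket{\mu_l^*}\otimes\ket{\nu_k}\}$ is $R$-invariant, with $R$ acting on it as $\left(\begin{smallmatrix}0 & \lambda_k\lambda_l\\ \lambda_k\lambda_l & 0\end{smallmatrix}\right)$ and eigenvalues $\pm\lambda_k\lambda_l$. These blocks are mutually orthogonal and exhaust the $d_1^2$-dimensional subspace, so the spectrum of $R$ is $\{0\}\cup\{\lambda_k^2\}_k\cup\{\pm\lambda_k\lambda_l\}_{k<l}$. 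Since all $\lambda_i \ge 0$ and $\lambda_k\lambda_l \le \max(\lambda_k^2,\lambda_l^2)\le\max_i\lambda_i^2$, every eigenvalue is at most $\max_i\lambda_i^2$; hence $R \le \left(\max_i\lambda_i^2\right)\eins$, which is the claim.

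The argument is essentially routine once the partial transpose is written in the block form above. The only points that need a little care are the bookkeeping of the complex conjugation relative to the reference basis (so that $\{\ket{\mu_i^*}\}$ is still orthonormal, even though the $\ket{\mu_i}$ need not be among the reference basis vectors) and the observation that $R$ acts trivially outside the span of the $\ket{\mu_k^*}\otimes\ket{\nu_l}$; I do not expect a genuine obstacle.
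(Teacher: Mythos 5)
Your proof is correct and follows essentially the same route as the paper's: both write the partial transpose of $\ketbra{\psi}$ in the product basis built from the Schmidt vectors, identify the diagonal entries $\lambda_i^2$ and the anti-diagonal $2\times 2$ blocks with entries $\lambda_i\lambda_j$, and read off the spectrum $\{0,\lambda_i^2,\pm\lambda_i\lambda_j\}$. Your version is slightly more careful in tracking the complex conjugation of the $\ket{\mu_i}$ relative to the reference basis, which the paper glosses over, but the argument is the same.
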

\begin{proof}
Writing down $\left( \ketbra{\psi} \right)^{T_M}$ in the basis $\lbrace\ket{\mu_i} \otimes \ket{\nu_i}\rbrace_{i=1 \dots d_{1}, j=1 \dots d_{2}}$, one obtains a matrix with two different kinds of submatrices. First, a diagonal one with diagonal elements $\lambda_i^2$ or zero. Second, anti-diagonal submatrices of the form
\be
\begin{pmatrix}
0 & \lambda_i \lambda_j \\
\lambda_i \lambda_j & 0 \\
\end{pmatrix}
\:.
\ee
Thus, the eigenvalues of the total matrix are $\lbrace \pm \lambda_i \lambda_j, \lambda_i^2, 0 \rbrace$ and the maximum of these eigenvalues has the form $\lambda_i^2$.\\ 
\end{proof}

For the following lemma, we need to keep in mind that the application of the Pauli operator $Z_k$ to a graph state basis vector creates a bit flip on bit $k$, \ie,
\be
\label{eq:flipprop}
Z_k \ket{\vec{a}} = \ket{a_1 \dots a_{k-1} \: a_k \oplus 1 \: a_{k+1} \dots a_n} \: .
\ee
\begin{lemma}
\label{lem:rule3}
Given a graph $G$. Then, in the associated graph state basis,
\begin{align}
\label{eq:rule3}
\left( \ketbra{\vec{a}} + \ketbra{\vec{b}}\right)^{T_k} =  \ketbra{\vec{a}} + \ketbra{\vec{b}}\:,
\end{align}
if $\ket{\vec{b}}$ is obtained from $\ket{\vec{a}}$ in one of the two following ways.
\begin{enumerate}[(i)]
\item $\ket{\vec{b}} = Z_k \ket{\vec{a}}$.
\item $\ket{\vec{b}} = \prod_{i \in \NN(k)}Z_i \ket{\vec{a}}$.
\end{enumerate}
\end{lemma}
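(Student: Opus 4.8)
The plan is to prove the two cases (i) and (ii) essentially by the same mechanism, reducing everything to Lemma~\ref{lem:rule1}: a sum $\ketbra{\vec a}+\ketbra{\vec b}$ of two graph-state-basis projectors is invariant under $T_k$ as soon as $k$ lies in the ``safe set'' $\UU_{\{k\}}$ for the bipartition $\{k\}\vert\comp{\{k\}}$, except that now we have to be careful because here $\vec a$ and $\vec b$ differ exactly on the bit $k$, so Lemma~\ref{lem:rule1} cannot be invoked for the diagonal terms $\ketbra{\vec a}^{T_k}$ and $\ketbra{\vec b}^{T_k}$ directly — those are not generally invariant. Instead I would expand the left-hand side as $\ketbra{\vec a}^{T_k}+\ketbra{\vec b}^{T_k}$ and track the off-diagonal (coherence) terms that $T_k$ produces, showing that the coherences generated from $\ketbra{\vec a}$ are exactly cancelled or reproduced by those from $\ketbra{\vec b}$, using $\ket{\vec b}=Z_k\ket{\vec a}$ (resp.\ $\ket{\vec b}=\prod_{i\in\NN(k)}Z_i\ket{\vec a}$) together with the bit-flip rule~(\ref{eq:flipprop}).

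Concretely, I would write each projector via~(\ref{eq:projectorform}) as $\ketbra{\vec a}=\prod_j \tfrac{(-1)^{a_j}g_j+\eins}{2}$, expand the product into the $2^n$ stabilizer monomials $S=\prod_j g_j^{x_j}$, and recall from the proof of Lemma~\ref{lem:rule1} that $S^{T_k}=(-1)^{f(\vec x)}S$ where $f$ now depends on the single-qubit partition $\{k\}$. The key structural fact is that $f(\vec x)$ is controlled by the $Y$-content of $S$ on qubit $k$: transposing on qubit $k$ flips the sign iff $S$ acts as $Y_k$, i.e.\ iff $g_k$ appears (contributing the $X_k$) together with an odd number of the $g_i$ with $i\in\NN(k)$ (each contributing a $Z_k$). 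So $f(\vec x)=x_k\cdot\bigl(\bigoplus_{i\in\NN(k)}x_i\bigr)$. The two operations in the lemma are precisely the two ways to toggle between $\vec a$ and $\vec b$ by multiplying by a stabilizer-diagonal ``flip'': case (i) is $Z_k$, which in the $g$-expansion multiplies the monomial $S$ by $(-1)^{x_k}$... no — rather, $Z_k$ conjugation sends $g_j\mapsto g_j$ for all $j$ but anticommutes with $g_i$ for $i\in\NN(k)$, so $\bra{\vec b}S\ket{\vec a'}$-type matrix elements pick up controlled signs that match exactly the factor $(-1)^{f(\vec x)}$. I would make this precise by computing $\bra{\vec b}(\ketbra{\vec a})^{T_k}\ket{\vec a}$ and the companion matrix elements and checking that the transposed operator, written in the $\{\ket{\vec a},\ket{\vec b}\}$ block, equals the untransposed one.

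The cleanest route is probably: since $\ket{\vec a},\ket{\vec b}$ differ only in bit $k$ in case (i), the operator $R:=\ketbra{\vec a}+\ketbra{\vec b}$ lives on a two-dimensional subspace and $R^{T_k}$ is determined by the four matrix elements $\bra{\vec a}R^{T_k}\ket{\vec a}$, $\bra{\vec b}R^{T_k}\ket{\vec b}$, $\bra{\vec a}R^{T_k}\ket{\vec b}$ (together with its conjugate). The diagonal ones are unchanged because $(\ketbra{\vec a})^{T_k}$ has the same diagonal as $\ketbra{\vec a}$ in any product basis aligned with qubit $k$'s computational basis — and here the graph-state basis restricted to the block is such a basis after the appropriate local unitary, so I must instead argue at the level of the stabilizer expansion that the diagonal matrix elements $\bra{\vec a}S^{T_k}\ket{\vec a}$ vanish whenever $f(\vec x)=1$ (since $S$ then contains $Y_k$, which is traceless and off-diagonal). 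For the single off-diagonal element, $\bra{\vec a}S^{T_k}\ket{\vec b}=(-1)^{f(\vec x)}\bra{\vec a}S\ket{\vec b}$, and one checks $\bra{\vec a}S\ket{\vec b}\neq0$ forces exactly $f(\vec x)=0$ in case (i) — because $\ket{\vec b}=Z_k\ket{\vec a}$ means $S$ must act as $X_k$ or $Y_k$ on qubit $k$, but the $Y_k$ case is killed, and in the $X_k$ case $f(\vec x)=0$. Case (ii) is symmetric with the roles of ``the $g_k$ factor'' and ``an odd subset of $\{g_i:i\in\NN(k)\}$'' interchanged: $\ket{\vec b}=\prod_{i\in\NN(k)}Z_i\ket{\vec a}$ flips all neighbor bits, so a nonzero $\bra{\vec a}S\ket{\vec b}$ forces $S$ to flip each neighbor bit, i.e.\ to contain $g_k$ an... — one again gets that the surviving monomials have $f(\vec x)=0$.

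I expect the main obstacle to be bookkeeping the parity/sign conditions precisely — in particular pinning down for which stabilizer monomials $S$ the matrix element $\bra{\vec a}S\ket{\vec b}$ is nonzero, and verifying that on exactly that set $f(\vec x)=0$, in both cases simultaneously and for arbitrary graphs. Everything else (tracelessness of $Y_k$, the bit-flip rule~(\ref{eq:flipprop}), linearity of $T_k$) is routine. An alternative, possibly shorter, argument avoids monomials entirely: note $Z_k$ and $\prod_{i\in\NN(k)}Z_i$ both commute with the partial transpose $T_k$ up to a controllable sign on the relevant subspace — indeed $Z_k^{T_k}=Z_k$ and $(\prod_{i\in\NN(k)}Z_i)^{T_k}=\prod_{i\in\NN(k)}Z_i$ trivially since these operators are real and only $Y_k$ is affected by $T_k$ — so if one can show the ``mixed'' operator $\ket{\vec a}\bra{\vec b}+\ket{\vec b}\bra{\vec a}$ is itself $T_k$-invariant in each case, the result follows by adding the (already handled or separately computed) diagonal pieces. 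I would try the monomial-expansion argument first and fall back to the operator-identity argument if the sign chase gets unwieldy.
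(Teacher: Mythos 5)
Your central idea --- expand $\ketbra{\vec a}+\ketbra{\vec b}$ in stabilizer monomials $S_{\vec x}=\prod_i g_i^{x_i}$, note that $S_{\vec x}^{T_k}=(-1)^{f(\vec x)}S_{\vec x}$ with $f(\vec x)=x_k\cdot\bigl(\bigoplus_{i\in\NN(k)}x_i\bigr)$ (a sign appears exactly when the monomial carries a $Y$ on qubit $k$), and show that the sum of the two projectors kills every monomial with $f(\vec x)=1$ --- is exactly the paper's proof: there, case (i) is handled by observing that the $g_k$ factor cancels between the two projectors (only $x_k=0$ survives), and case (ii) by observing that only even-weight products of the neighbor generators survive, so in both cases no $Y_k$ remains. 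Your very first computation for case (i), ``$Z_k$ multiplies the monomial $S$ by $(-1)^{x_k}$'', was the correct one and finishes the argument in two lines: the coefficient of $S_{\vec x}$ in $\ketbra{\vec a}+\ketbra{\vec b}$ is $2^{-n}(-1)^{\vec a\cdot\vec x}\bigl(1+(-1)^{x_k}\bigr)$, which vanishes whenever $x_k=1$ and hence whenever $f(\vec x)=1$; the analogous coefficient $2^{-n}(-1)^{\vec a\cdot\vec x}\bigl(1+(-1)^{\sum_{i\in\NN(k)}x_i}\bigr)$ settles case (ii).

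The route you then settle on, however, contains genuine errors, all stemming from conflating the computational basis with the graph-state basis. First, $Z_k$ anticommutes with $g_k$ and with no other generator (it only sees the $X_k$ inside $g_k$ and commutes with the $Z_k$ factors contributed by the neighbor generators), so your ``correction'' that it anticommutes with $g_i$ for $i\in\NN(k)$ is wrong --- you crossed out the right statement. Second, the diagonal elements $\bra{\vec a}S_{\vec x}^{T_k}\ket{\vec a}$ do \emph{not} vanish when $f(\vec x)=1$: every stabilizer monomial is diagonal in the graph basis with eigenvalues $\pm 1$, so $\bra{\vec a}S_{\vec x}^{T_k}\ket{\vec a}=(-1)^{f(\vec x)+\vec a\cdot\vec x}\neq 0$; that $Y_k$ is traceless and off-diagonal refers to the computational basis on qubit $k$ and implies nothing about graph-basis matrix elements. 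For the same reason $\bra{\vec a}S_{\vec x}\ket{\vec b}=0$ for \emph{all} $\vec x$ once $\vec a\neq\vec b$, so the off-diagonal bookkeeping you propose is vacuous and cannot ``force $f(\vec x)=0$''. Finally, $R^{T_k}$ is not a priori supported on the span of $\ket{\vec a},\ket{\vec b}$, so four matrix elements do not determine it; what actually saves the day is that $R^{T_k}$ is again a combination of stabilizer monomials, hence diagonal in the graph basis, which reduces everything to the coefficient cancellation above. Discard the ``cleanest route'' paragraph, keep the monomial-coefficient computation, and you have the paper's proof.
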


\begin{proof}
\begin{enumerate}[(i)]
\item With Eq.~(\ref{eq:projectorform}) and Eq.~(\ref{eq:flipprop}), we have
\begin{align}
\label{eq:nogk}
& \ketbra{\vec{a}} + \ketbra{\vec{b}} \nonumber\\
= & \: \left(\frac{g_k+\eins}{2} + \frac{-g_k+\eins}{2} \right) \prod_{\substack{i = 1\\i \neq k}}^{n} \frac{(-1)^{a_i} g_i+\eins}{2} \nonumber \\
= & \: \prod_{\substack{i = 1\\i \neq k}}^{n} \frac{(-1)^{a_i} g_i+\eins}{2}\:.
\end{align}
Since $g_k$ cancels in Eq.~(\ref{eq:nogk}), the explicit form of the generators $g_i$ implies that, in this equation, there is no $Y$ on qubit $k$. Since $Y$ is the only Pauli matrix that changes under partial transposition, $\ketbra{\vec{a}} + Z_k \ketbra{\vec{a}} Z_k$ is invariant under $T_k$.
\begin{widetext}
\item Using again Eqs.~(\ref{eq:projectorform}) and (\ref{eq:flipprop}) yields
\be
\label{eq:2}
\ketbra{\vec{a}} + \ketbra{\vec{b}}= \left( \prod_{i \in \NN(k)} \frac{(-1)^{a_i} g_i+\eins}{2} + \prod_{i \in \NN(k)} \frac{(-1)^{a_i+1} g_i+\eins}{2} \right) \frac{(-1)^{a_k} g_k+\eins}{2} \prod_{i \notin \NNN(k)} \frac{(-1)^{a_i} g_i+\eins}{2}\:.
\ee
The expression in the first brackets can be simplified to
\begin{align}
\label{eq:simplification}
\prod_{i \in \NN(k)} \frac{(-1)^{a_i} g_i+\eins}{2} + \prod_{i \in \NN(k)} \frac{(-1)^{a_i+1} g_i+\eins}{2} = & \: 2^{-N} \left( \sum_{\vec{x} \in \lbrace 0,1 \rbrace^ {N}} (-1)^{\vec{a} \vec{x}} \prod_{i \in \NN(k)} g_i^{x_i}+ \sum_{\vec{x} \in \lbrace 0,1 \rbrace^{N}} (-1)^{\vec{a} \vec{x} +\sum_i x_i} \prod_{i \in \NN(k)} g_i^{x_i} \right) \nonumber \\
= & \: 2^{-N+1} \left( \sum_{\rm even} (-1)^{\vec{a} \vec{x}} \prod_{i \in \NN(k)} g_i^{x_i}\right)\:,
\end{align}
\noindent where we defined $N = \vert \NN(k) \vert$ and $\sum_{\rm even}$ is the sum over all $\vec{x} \in \lbrace 0,1 \rbrace^N$ for which $\sum_{i=1}^{N} x_i$ is even.\\
Combining Eqs.~(\ref{eq:2}) and (\ref{eq:simplification}) yields
\begin{align}
\label{eq:4}
\ketbra{\vec{a}} + \ketbra{\vec{b}}= &\: 2^{-N} \left( \sum_{\rm even} (-1)^{\vec{a} \vec{x}} \prod_{i \in \NN(k)} g_i^{x_i}\right) (-1)^{a_k} g_k \prod_{i \notin \NNN(k)} \frac{(-1)^{a_i} g_i+\eins}{2} \nonumber \\
& +2^{-N} \left( \sum_{\rm even} (-1)^{\vec{a} \vec{x}} \prod_{i \in \NN(k)} g_i^{x_i}\right) \prod_{i \notin \NNN(k)} \frac{(-1)^{a_i} g_i+\eins}{2} \:.
\end{align}
\end{widetext}
In Eq.~(\ref{eq:4}), the right-hand side is invariant under $T_k$. To see this, consider the first part of the sum, in which all terms contain $g_k$ and an even number of generators in $\NN(k)$. Together with the form of the generators [see Eq.~(\ref{eq:generators})], this implies that there is no $Y$ on qubit $k$.\\
The second part of the right-hand side of Eq.~(\ref{eq:4}) does not have a $Y$ on qubit $k$ either, since it does not contain $g_k$. Consequently, Eq.~(\ref{eq:4}) is invariant under $T_k$.
\end{enumerate}
\end{proof}

Let us now return to the main proof in which we will need Lemmas \ref{lem:rule1}, \ref{lem:rule2} and \ref{lem:rule3}.

\begin{proposition}
\label{lem:witness}
Let $\ketbra{Cl_n}$ be an n-qubit linear cluster state with $n>3$ and $g_i$ the generators of its stabilizer group. Let $\BB = \lbrace \beta_i \rbrace$ be a subset of all qubits such that $\NNN(\beta_i) \cap \NNN(\beta_j) = \lbrace \rbrace$ for $i \neq j$. We define $m = \vert \BB\vert$. Let $\sum_{\vec{s}}$ be the sum over all vectors $\vec{s}$ of length m with elements $s_i = \pm 1$ that contain at least two elements that are $-1$, \ie, $\sum_{i=1}^{m} s_i \leq m -4$. Then,
\begin{align}
\label{eq:witnessform}
W_{\rm Cln} = \frac{1}{2}\eins - \ketbra{Cl_n} - \frac{1}{2} \left( \sum \limits_{\vec{s}} \prod \limits_{i \in \BB}\frac{s_i g_i+\eins}{2} \right) 
\end{align}
is a fully decomposable witness for $\ketbra{Cl_n}$, \ie, for any strict subset $M$ of all qubits, it can be written in the form $W_{Cl} = P_M + Q_M^{T_M}$, where $P_M \geq 0, Q_M \geq 0$.
\end{proposition}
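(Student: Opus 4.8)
The plan is to put $W_{\rm Cln}$ into a transparent form in the graph-state basis and then, for an arbitrary strict subset $M$, produce the decomposition $W_{\rm Cln}=P_M+Q_M^{T_M}$ explicitly with the three lemmas. First, note that $\prod_{i\in\BB}\frac{s_ig_i+\eins}{2}$ is the projector onto the span of those graph-basis vectors $\ket{\vec a}$ whose label on the qubits of $\BB$ realizes the sign pattern $\vec s$; hence $\sum_{\vec s}\prod_{i\in\BB}\frac{s_ig_i+\eins}{2}$ is the projector $\Pi_{\ge2}$ onto all $\ket{\vec a}$ whose $\BB$-label has weight at least two. Since $\ketbra{Cl_n}=\ketbra{\vec 0}$ this gives $W_{\rm Cln}=\tfrac12\eins-\ketbra{\vec 0}-\tfrac12\Pi_{\ge2}=\tfrac12(R-\ketbra{\vec 0})$, where $R$ is the projector onto the span of all $\ket{\vec a}$ of $\BB$-label weight at most one, with $\ket{\vec 0}$ removed; in particular $R\ge0$ and $R\ket{\vec 0}=0$. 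So for each $M$ we must write $R-\ketbra{\vec 0}$ as a positive operator plus the partial transpose of a positive operator.

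Fix a strict subset $M$, and let $r\ge1$ be determined by the fact that all nonzero Schmidt coefficients of $\ket{\vec 0}$ across $M\vert\comp M$ equal $2^{-r/2}$ (with $r\ge1$ since the path is connected). I will use the reformulation: it suffices to find an operator $X$ with $R-X\ge0$ and $(X-\ketbra{\vec 0})^{T_M}\ge0$, since then $R-\ketbra{\vec 0}=(R-X)+(X-\ketbra{\vec 0})$ yields the decomposition with $P_M=\tfrac12(R-X)$ and $Q_M=\tfrac12(X-\ketbra{\vec 0})^{T_M}$. The right $X$ is a corrected version of the projector $\eins_{\rm eff}$ onto the span of products of Schmidt vectors of $\ket{\vec 0}$ across the cut: for the bare projector witness ($\BB=\emptyset$, so $R=\eins-\ketbra{\vec 0}$) the choice $X=\eins_{\rm eff}-(2^{r}-1)\ketbra{\vec 0}$ already works, because then $R-X=\eins-\eins_{\rm eff}+(2^{r}-2)\ketbra{\vec 0}\ge0$ and $(X-\ketbra{\vec 0})^{T_M}$ is twice the projector onto the antisymmetric subspace of the effective $2^{r}\!\times\!2^{r}$ space, hence $\ge0$. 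When $\BB\neq\emptyset$, the inequality $R-X\ge0$ additionally demands that $X$ be nonpositive on $\ket{\vec 0}$ and on every block of $\BB$-label weight $\ge2$, so $\eins_{\rm eff}$ must be modified to cancel its leakage into those blocks.

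Here the lemmas enter. Lemma 3 is the construction tool: pairing $\ket{\vec a}$ with $Z_k\ket{\vec a}$ or with $\prod_{i\in\NN(k)}Z_i\ket{\vec a}$ makes the sum of the two projectors $T_k$-invariant, and iterating over $k\in M$ builds the $T_M$-compatible pieces out of which the correction to $X$ (equivalently $Q_M$) is assembled; the $Z$-strings are chosen so that for every $\beta_i$ with $\NNN(\beta_i)\subseteq M$ or $\NNN(\beta_i)\subseteq\comp M$ the associated flips stay inside the weight-$\le1$ blocks. Lemma 1 then says that two graph-basis vectors differing at a qubit of $\UU_M$ never contribute a matrix element to the partially transposed operators in play, so the positivity checks $R-X\ge0$ and $(X-\ketbra{\vec 0})^{T_M}\ge0$ decouple into blocks labelled by the restriction of $\vec a$ to $\UU_M$; any weight-$\ge2$ block containing a flipped $\beta_i\in\UU_M$ is then automatically harmless. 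Inside a block the only negative contributions come from partially transposed graph-basis projectors, whose top eigenvalue Lemma 2 bounds by the square of the largest Schmidt coefficient across $M$, which here is $2^{-r}\le\tfrac12$ — precisely the constant multiplying $\eins$ and $R$ above — so the blocks come out positive. As $M$ was arbitrary, $W_{\rm Cln}$ is fully decomposable.

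\textbf{The main obstacle} is the construction of $X$ (equivalently $Q_M$) for cuts $M$ that straddle two or more of the neighbourhoods $\NNN(\beta_i)$. If at most one $\NNN(\beta_i)$ is straddled, every vector of $\BB$-label weight $\ge2$ has a flipped $\beta$-coordinate in $\UU_M$, so by Lemma 1 the whole term $\tfrac12\Pi_{\ge2}$ decouples from the $\ket{\vec 0}$-sector and the bare-witness $X$ still works. But when several $\NNN(\beta_i)$ straddle the cut there are weight-$\ge2$ vectors all of whose flipped $\beta$-coordinates lie on the boundary, and $X$ genuinely must be modified so that its partial transpose supplies exactly the missing negativity on those blocks without overshooting on $\ket{\vec 0}$ or on the weight-one blocks; one checks that the unmodified $X$ fails here. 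The disjointness hypothesis $\NNN(\beta_i)\cap\NNN(\beta_j)=\emptyset$ is precisely what keeps the contributions of the different straddled neighbourhoods independent, so that the correction to $X$ can be assembled as a tensor-product-like combination over them; carrying this out and then verifying the block-by-block eigenvalue bookkeeping via Lemmas 1 and 2 is the technical heart of the argument.
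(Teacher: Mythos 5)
There is a genuine gap. Your reduction $W_{\rm Cln}=\tfrac12\bigl(R-\ketbra{\vec 0}\bigr)$, your identification of the parameter $r$ (the number of neighborhoods $\NNN(\beta_i)$ straddled by the cut), your handling of the case $r\leq 1$ via Lemma 1, and your reading of the roles of the three lemmas all match the actual proof. But the case $r\geq 2$ --- which you yourself flag as ``the technical heart of the argument'' and leave to be ``carried out'' --- is precisely the content of the proposition, and you do not construct the required operator $X$ (equivalently $P_M$, $Q_M$) there. The paper does this with an explicit algorithm: starting from $P_M^{(0)}=\ketbra{Cl_n}=\ketbra{\vec 0}$, for each straddled neighborhood it doubles the current sum of graph-basis projectors by conjugating with a $Z$-string chosen via Lemma 3 so that the resulting pair is compatible with the partial transposition on that neighborhood; it stops before the last straddled neighborhood $t$ and sets $P_M=P_M^{(t-1)}-\ketbra{\vec 0}$. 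The positivity of $Q_M=(W-P_M)^{T_M}$ then reduces (everything being diagonal in the graph basis) to two cases: on weight-$\geq 2$ vectors $\ket{\vec k}$, Lemma 1 applied to the one remaining qubit among $\{i_0,j_0\}$ different from $t$ kills $\bra{\vec k}(P_M^{(t-1)})^{T_M}\ket{\vec k}$; on weight-$\leq 1$ vectors, the count of $2^{r-1}$ terms in $P_M^{(t-1)}$, each bounded by $2^{-r}$ via Lemma 2 and the extraction of $r$ Bell pairs by local complementation and in-partition CZ gates, gives exactly the needed $\tfrac12$. None of this bookkeeping --- in particular why the number of projectors grows only as $2^{r-1}$ while the Schmidt bound improves as $2^{-r}$, which is the inequality that makes the whole construction close --- appears in your proposal.

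Two smaller points. First, your description of Lemma 3's use is inverted: the $Z$-flips are applied for the $\beta_i$ whose neighborhoods \emph{are} straddled by $M$ (the contained ones contribute nothing), and they act on qubits in $\NN(\beta_i)$, so all projectors in $P_M$ keep $\BB$-label weight zero --- they never enter the weight-one blocks. Second, your $\eins_{\rm eff}$-based ansatz $X=\eins_{\rm eff}-(2^r-1)\ketbra{\vec 0}$ does not satisfy $R-X\geq 0$ even in the easy case once $\BB\neq\emptyset$, since $\eins_{\rm eff}$ is not supported on the weight-$\leq 1$ subspace; the paper avoids this by taking $P_M=0$ there and verifying $W^{T_M}\geq 0$ entrywise on the diagonal rather than exhibiting an intermediate operator.
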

For the sake of brevity, we define $P_{+} = \sum_{\vec{s}} \prod_{i \in \BB}\frac{s_i g_i+\eins}{2}$. Note that $P_{+}$ is a sum of all projectors onto graph state basis vectors that contain at least two excitations in $\BB$, \ie, two bits that equal one. For example, if we choose $\BB = \lbrace 1, 4, 7, \dots \rbrace$ for the sake of illustration,
\begin{align}
\label{eq:6}
P_{+} = \sum_{\vec{x} \in \lbrace 0,1 \rbrace^{n-m}} & \left( \ketbra{0 x_1 x_2 1 x_3 x_4 1 \dots} \right.\nonumber \\
+ \: &\ketbra{1 x_1 x_2 0 x_3 x_4 1 \dots} \nonumber \\
+ \: &\ketbra{1 x_1 x_2 1 x_3 x_4 0 \dots} \nonumber \\
+ \: &\ketbra{1 x_1 x_2 1 x_3 x_4 1 \dots} \nonumber \\
+ \: & \left. \dots \right)
\end{align}
This also illustrates why the presented construction does not work for linear cluster states that consist of three or less qubits. In such a case, $\BB$ could only have one element, since the elements' neighborhoods cannot overlap. Then, however, one cannot fulfill the condition that there must always be two bits in $\BB$ that equal one. Note that sets $\BB = \lbrace 1,4 \rbrace$ and $\BB = \lbrace 1,4,7 \rbrace$ for four and seven qubits, respectively, result in the witnesses $W_{\rm Cl4}$ and $W_{\rm Cl7}$ of Eqs.~(6) and (8) of the main text. Moreover, these choices of $\BB$ lead to the noise tolerance of Eq.~(9) of the main text.
\begin{proof}
First, we will provide a way to construct an appropriate operator $P_M \geq 0$ for every $M$. The main trick is to do this in such a way that this operator is invariant under certain partial transpositions. Second, we will use these $P_M$ to show that $Q_M = \left(W-P_M\right)^{T_M} \geq 0$ for every $M$.\\
Note that we order the qubits $\beta_i$ in a canonical way such that $\beta_i < \beta_{i+1}$. To simplify notation, we define bitstring $\vec{w}_i$ for every $\NNN(\beta_i)$ (of $\vert \NNN(\beta_i) \vert$ bits length) in such a way that its bits define to which partition the correspoding qubits belongs. $\vec{w}_i = (\omega_{-1}, \omega_{0}, \omega_{+1}),\: \omega_j \in \lbrace 0,1 \rbrace,$ means that qubit $\beta_{i}+j$ does not belong to $M$ if $\omega_{j} = 0$ and it belongs to $M$ if $\omega_{j} = 1$. For $\beta_i$ being the first qubit, we consider $\vec{w}_i = (\omega_{0}, \omega_{+1})$; for $\beta_i$ being the last qubit, we use $\vec{w}_i = (\omega_{-1}, \omega_{0})$.\\
Given $M$, we proceed in the following way to construct $P_M$.
\begin{enumerate}
\item Start with $P^{(0)}_M = \ketbra{Cl_n} = \ketbra{0 \dots 0}$.
\item Loop through the values of $i = 1, 2, \dots, m$ and update $P^{(i)}_M$ until you reach $i > m$. Then proceed with step 3. The update of $P^{(i)}_M$ is carried out in the following way.
\begin{enumerate}[(i)]
\item If $\vert \NNN(\beta_i) \vert = 2$:
\begin{enumerate}[I]
\item If $\vec{w}_i = 01$ or $\vec{w}_i = 10$, then $P^{(i)}_M = P^{(i-1)}_M + Z_{k} P^{(i-1)}_M Z_{k}$, where $k \in \NN(\beta_i)$.
\item If $\vec{w}_i = 11$ or $\vec{w}_i = 00$, let $P^{(i)}_M = P^{(i-1)}_M$.
\end{enumerate}
\item If $\vert \NNN(\beta_i) \vert = 3$:
\begin{enumerate}[I]
\item If $\vec{w}_i = 110$ or $\vec{w}_i = 001$, then $P^{(i)}_M = P^{(i-1)}_M + Z_{\beta_i+1} P^{(i-1)}_M Z_{\beta_i+1}$.
\item If $\vec{w}_i = 010$ or $\vec{w}_i = 101$, then $P^{(i)}_M = P^{(i-1)}_M + Z_{\beta_i-1} Z_{\beta_i+1} P^{(i-1)}_M Z_{\beta_i-1} Z_{\beta_i+1}$.
\item If $\vec{w}_i = 100$ or $\vec{w}_i = 011$, then $P^{(i)}_M = P^{(i-1)}_M + Z_{\beta_i-1} P^{(i-1)}_M Z_{\beta_i-1}$.
\item In all other cases, let $P^{(i)}_M = P^{(i-1)}_M$.
\end{enumerate}
\end{enumerate}
\item Let $r$ be the number of times that one of the cases 2.(i).I, 2.(ii).I, 2.(ii).II or 2.(ii).III occurred, \ie, the number of steps in which $P^{(i)}_M$ changed.\\
If $r \leq 1$, define
\be
\label{eq:P_def1}
P_M = 0 \:.
\ee
Let $t$ be the value of $i$ for which $P^{(i)}_M$ was changed the last time, \ie, $P^{(i)}_M = P^{(t)}_M \: \forall \: i> t$. If $r > 1$, define 
\be
\label{eq:P_def}
P_M = P^{(t-1)}_M -\ketbra{Cl_n} \:.
\ee
\end{enumerate}
In other words, $r$ equals the number of bits $\beta_i \in \BB$ which do not lie at the border of $M$ (or $\comp{M}$), i.e. they do not obey 
\be
\label{eq:5}
\NNN(\beta_i) \subseteq M \: {\rm or}  \:\NNN(\beta_i) \subseteq \comp{M}\:.
\ee
Thus, for $r=0$, all $\beta_i \in \BB$ obey this property, for $r =1$, there is one exception.\\
Note that the operator $P_M$ constructed via the given algorithm is either zero or a sum of one-dimensional projectors onto basis states, \ie, 
\be
\label{eq:formofP}
P_M = \sum_{\vec{a}} \ketbra{\vec{a}} \:.
\ee
This can be seen by the fact that $P_M^{(0)} = \ketbra{Cl_n} = \ketbra{0 \dots 0}$, the application of $Z$ only flips a bit and finally $\ketbra{Cl_n}$ is subtracted again. Let us illustrate the algorithm by a concrete example.\\
\textbf{Example of the algorithm:} Consider an eight-qubit cluster state and $M = \lbrace 2,3,5\rbrace$. We choose $\BB = \lbrace 1,4,7 \rbrace$. Then, the algorithm proceeds as follows.\\
\begin{itemize}
\item $P^{(0)}_M = \ketbra{Cl_8} = \ketbra{00000000}$
\item Step $i = 1$, \ie, $\beta_1 = 1$: Since $\vec{w}_1 = 01$, case 2.(i).I applies and 
\begin{align}
P_M^{(1)} = &\: \ketbra{00000000} \nonumber\\
& + Z_2 \ketbra{00000000} Z_2 \nonumber\\
= &\:\ketbra{00000000} \nonumber\\
& + \ketbra{01000000}\:.
\end{align}
\item Step $i = 2$, \ie, $\beta_2 = 4$: Since $\vec{w}_2 = 101$, case 2.(ii).II applies and 
\begin{align}
P_M^{(2)} = &\: P_M^{(1)} + Z_3 Z_5 P_M^{(1)} Z_3 Z_5 \nonumber\\
= & \:\ketbra{00000000} \nonumber\\
&+ \ketbra{01000000} \nonumber\\
&+\ketbra{00101000} \nonumber\\
&+ \ketbra{01101000}\:.
\end{align}
\item Step $i = 3$, \ie, $\beta_3 = 7$: We have $\vec{w}_3 = 000$. Thus, case 2.(ii).IV applies and $P_M^{(3)} = P_M^{(2)}$.
\item As $i = 4 > m = 3$, we abort the loop. $P_M^{(i)}$ changed in two steps, \ie, $r=2$. Moreover, since $P_M^{(i)}$ did not change in the last step, \ie, the third step, $t = 2$. Then,
\begin{align}
P_M = & \:P^{(t-1)}_M -\ketbra{Cl_7} \nonumber\\
= &\:\noindent  P^{(1)}_M -\ketbra{00000000} \nonumber\\
=& \: \ketbra{01000000}
\end{align}
\end{itemize}
Let us now return to the general case and understand the properties of the operator $P_M$ for an arbitrary $M$. The construction uses Lemma \ref{lem:rule3} to ensure that, in every step, either
\begin{subequations}
\label{eq:P_invariance1}
\be
\label{eq:P_invariance1a}
\left(P^{(i)}_M\right)^{T_{M_{i}}} = \left(P^{(i)}_M\right)^{T_{\NNN(\beta_i)}}
\ee
or
\be
\label{eq:P_invariance1b}
\left(P^{(i)}_M\right)^{T_{M_{i}}} = P^{(i)}_M
\ee
\end{subequations}
hold, where we defined $M_k = M \cap \NNN(\beta_k)$.\\
To see that Eqs.~(\ref{eq:P_invariance1}) hold, it is best to consider an example: Assume that $\vec{w}_i = 010$. Then, case 2.(ii).II of the algorithm applies. As seen above, $P_M^{(i-1)}$ has the form $P_M^{(i-1)} = \sum_{\vec{b}} \ketbra{\vec{b}}$. Thus, 
\be
P_M^{(i)} = \sum_{\vec{b}} \ketbra{\vec{b}} +  Z_{i-1} Z_{i+1} \ketbra{\vec{b}} Z_{i+1} Z_{i-1} \: .
\ee
Then, Lemma 2.(ii) implies that Eq.~(\ref{eq:P_invariance1b}) holds.\\
Eqs.~(\ref{eq:P_invariance1}) hold in every step, \ie, for $i = j$ and for $i=k$, where $j \neq k$. According to the premise of non-overlapping neighborhoods of the qubits in $\BB$, we have $\NNN(\beta_j) \cap \NNN(\beta_k) = \lbrace \rbrace$. Therefore, the partial transpositions in Eqs.~(\ref{eq:P_invariance1}) for $i=j$ always affect qubits different from the ones that are affected by the partial transpositions for $i=k$. For this reason, Eqs.~(\ref{eq:P_invariance1}) for $P_M^{(t-1)}$ hold with respect to to every value of $k$, except for $k=t$. More precisely,
\begin{subequations}
\label{eq:P_invariance}
\be
\left(P_M^{(t-1)}\right)^{T_{M_{k}}} = \left(P_M^{(t-1)}\right)^{T_{\NNN(\beta_k)}} 
\ee
or
\be
\left(P_M^{(t-1)}\right)^{T_{M_{k}}} =P_M^{(t-1)} 
\ee
\end{subequations}
is true for every $k \neq t$.
We will use this important property later.\\
Let us proceed with the proof. Since $P_M$ is zero or has the form of Eq.~(\ref{eq:formofP}), we know that $P_M \geq 0$. Thus, it remains to show that $\left(W_{Cl} - P_M\right)^{T_M} = Q_M \geq 0, \; \forall \: M$.\\
Operators that are diagonal in a graph state basis can be written in the form $\sum_{\vec{x}} \prod_{i=1}^{n} g_i^{x_i}$, where the sum runs over binary strings $\vec{x}$ that depend on the operator. Since any partial transposition can at most introduce minus signs in some terms of this sum, such operators remain diagonal under any partial transposition. Therefore, it is enough to prove that 
\be
\label{eq:diagonal_positivity}
\bra{\vec{k}}\left(W_{Cl} - P_M\right)^{T_M}\ket{\vec{k}} \geq 0, \;\forall \: M, \ket{\vec{k}}\:.
\ee
Note that $P_{+} = \sum_{\vec{s}} \prod_{i \in \BB}\frac{s_i g_i+\eins}{2}$ is invariant under any partial transposition, since it does not contain any $Y$ operators due to the form of the generators and the fact that no two qubits in $\BB$ have a common neighbor. Therefore, together with the explicit form of the witness given in Eq.~(\ref{eq:witnessform}), we can rewrite Eq.~(\ref{eq:diagonal_positivity}) as
\be
\label{eq:diagonal_positivity2}
\frac{1}{2} - \frac{1}{2}  \bra{\vec{k}} P_{+} \ket{\vec{k}}- \bra{\vec{k}}\left(\ketbra{Cl_n} + P_M\right)^{T_M}\ket{\vec{k}} \geq 0, \; \forall \: M, \ket{\vec{k}}\:.
\ee
In order to prove this, we distinguish two different cases:
\begin{enumerate}
\item $\boxed{\bra{\vec{k}} P_{+} \ket{\vec{k}} \neq 0 \Leftrightarrow \bra{\vec{k}} P_{+} \ket{\vec{k}} = 1}$\\
\vskip 0.1cm
Note that the equivalence can best be see in Eq.~(\ref{eq:6}). Also, Eq.~(\ref{eq:6}) implies that there must be at least two bits of $\vec{k}$, say $k_{i_0},k_{j_0} \in \BB$, with $i_0 \neq j_0,$ such that $k_{i_0} = k_{j_0} = 1$.\\
In the case $P_M = 0$, Eq.~(\ref{eq:diagonal_positivity2}) and $\bra{\vec{k}} P_{+} \ket{\vec{k}} = 1$ are equivalent to
\be
\label{eq:0}
- \bra{\vec{k}}\left(\ketbra{Cl_n}\right)^{T_M}\ket{\vec{k}}  \geq 0 \:\forall \: M, \ket{\vec{k}}\:.
\ee
To see that the left-hand side always vanishes, one uses that $P_M = 0$ is equivalent to $r \leq 1$, \ie, cases 2.(i).I or 2.(ii).I or 2.(ii).II or 2.(ii).III in the algorithm occurred at most once. This means that $\NNN(\beta_i) \subseteq M$ or $\NNN(\beta_i) \subseteq \comp{M}$ holds for all qubits $\beta_i \in \BB$ with at most one exception, namely $\beta_t$. With $k_{i_0} = k_{j_0} = 1$, Lemma 1 can be applied to see that Eq.~(\ref{eq:0}) vanishes.\\
In the case $P_M \neq 0$, Eq.~(\ref{eq:diagonal_positivity2}) can be simplified using $\bra{\vec{k}} P_{+} \ket{\vec{k}} = 1$ to
\begin{align}
\label{eq:1}
& - \bra{\vec{k}}\left(\ketbra{Cl_n} + P_M\right)^{T_M}\ket{\vec{k}}  \geq 0 \:\forall \: M, \ket{\vec{k}}\nonumber \\
\Leftrightarrow & \: - \bra{\vec{k}}\left(P^{(t-1)}_M\right)^{T_M}\ket{\vec{k}}\geq 0 \:\forall \: M, \ket{\vec{k}}\:.
\end{align}
Here, the definition of $P_M$, Eq.~(\ref{eq:P_def}), has been used.\\
Now, $P_M$ and therefore $P_M^{(t-1)}$ consists of a sum of projectors onto graph basis states $\ket{\vec{a}}$ (see Eq.~(\ref{eq:formofP})). Since the algorithm starts with $P_M^{(0)} = \ketbra{0 \dots 0}$ and never flips any bits on the qubits $\beta_i \in \BB$, these states $\ket{\vec{a}}$ obey $a_{\beta_i} = 0, \; \forall \: i = 1, \dots, m$. Also, depending on whether $i_0 = t$ or $j_0 = t$, Eqs.~(\ref{eq:P_invariance}) can be applied to whichever of these two qubits is different from $t$. The invariance given by Eqs.~(\ref{eq:P_invariance}) then implies that this qubit can be treated as if it lay on the border of $M$ (or $\comp{M}$) and therefore Lemma \ref{lem:rule1} yields
\be
\bra{\vec{k}}\left(P^{(t-1)}_M\right)^{T_M}\ket{\vec{k}} = 0
\ee
and therefore Eq.~(\ref{eq:1}) holds.
\item $\boxed{\bra{\vec{k}} P_{+} \ket{\vec{k}} = 0}$\\
\vskip 0.1cm
To show that Eq.~(\ref{eq:diagonal_positivity2}) holds, we need to prove that 
\be
\label{eq:7}
\bra{\vec{k}}\left(\ketbra{Cl_n} + P_M \right)^{T_M}\ket{\vec{k}} \leq  \frac{1}{2} \: .
\ee
In the case $P_M \neq 0$, $P_M$ is given by Eq.~(\ref{eq:formofP}) and Eq.~(\ref{eq:7}) is equivalent to 
\be
\label{eq:8}
\bra{\vec{k}}\left(\ketbra{Cl_n} + \sum_{\vec{a}} \ketbra{\vec{a}} \right)^{T_M}\ket{\vec{k}} \leq  \frac{1}{2} \: .
\ee
Note that $\ketbra{Cl_n} + \sum_{\vec{a}} \ketbra{\vec{a}} = P_M^{(t-1)}$ consists of $2^{r-1}$ terms, as one starts with one term and doubles this number $(r-1)$ times to obtain $P_M^{(t-1)}$. Therefore, it is enough to show that
\be
\bra{\vec{k}}\left(\ketbra{Cl_n}\right)^{T_M}\ket{\vec{k}} \leq  2^{-r}
\ee
and
\be
\bra{\vec{k}}\left(\ketbra{\vec{a}}\right)^{T_M}\ket{\vec{k}} \leq  2^{-r} \: \forall \: \ket{\vec{a}}
\ee
holds. We will do this by using Lemma \ref{lem:rule2}. However, since the vectors $\ket{\vec{a}}$ are basis vectors of the graph state basis of $\ket{Cl_n}$, $\ket{\vec{a}}$ and $\ket{Cl_n}$ are LU-equivalent. Therefore, they have the same Schmidt coefficients and Lemma \ref{lem:rule2} results in the same upper bounds. For this reason, it suffices to show only one of these upper bounds, namely
\be
\label{eq:9}
\bra{\vec{k}}\left(\ketbra{Cl_n}\right)^{T_M}\ket{\vec{k}} \leq  2^{-r} \:.
\ee
The idea behind the subsequent proof of Eq.~(\ref{eq:9}) is that, every time $P_M^{(i)}$ changes in our algorithm, $r$ is increased, but at the same time, a Bell pair which has Schmidt coefficients $\lbrace \frac{1}{\sqrt{2}},\frac{1}{\sqrt{2}}\rbrace$ is created which results in a smaller upper bound according to Lemma \ref{lem:rule2}.\\
To easily calculate the Schmidt coefficients of $\ket{Cl_n}$ with respect to bipartition $M \vert \comp{M}$, we transform $\ket{Cl_n}$ via local transformations into disconnected Bell pairs whose qubits are in different partitions. In order not to change the Schmidt coefficients during this transformation, we apply the following two operations to the graph.
\begin{figure}
\includegraphics[width=0.9\columnwidth]{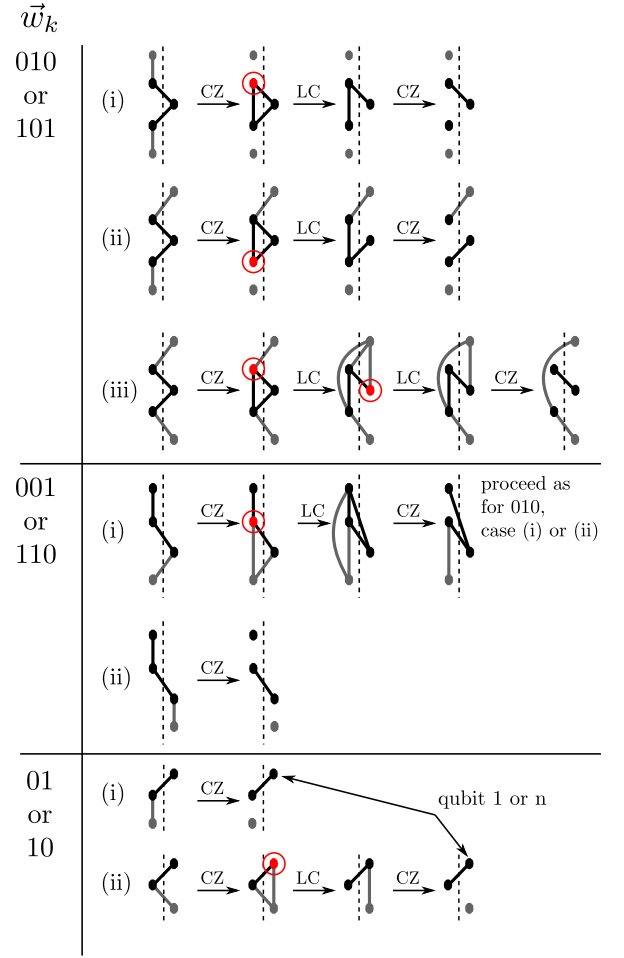}
\caption{\label{fig:threequbittrafos} For every case in which $P^{(i)}_M$ changes in the algorithm, we explicitly demonstrate that it is possible to create a Bell pair via a sequence of local transformations. Note that the cases $\vec{w}_k = 100$ and $\vec{w}_k = 011$ follow from $\vec{w}_k = 110$ or $\vec{w}_k = 001$ due to symmetry reasons.}
\end{figure}
\begin{itemize}
\item {\it Local Complementation (LC)}: Pick a qubit $j$ and invert its neighborhood graph. In other words, if between two neighbors of $j$ there is an edge, delete it. If between two neighbors of $j$ there is no edge, add an edge between them. Since this operation corresponds to the application of local unitaries, the Schmidt coefficients remain unchanged.
\item {\it Controlled-Z (CZ)}: Apply a controlled-Z operation to two qubits in the same partition. If they are connected by an edge, this will delete the edge. If they are not connected, this will create an edge. As this unitary operation is only applied on qubits in the same partition, it does not change the Schmidt coefficients.
\end{itemize}
Since these transformations can be applied to each set $\NNN(\beta_j)$ individually to create a Bell pair between the two partitions (see Fig. \ref{fig:threequbittrafos}) it is possible to end up with distinct Bell pairs between the two partitions.\\
Moreover, the number $q$ of Bell pairs that one can create by transforming the qubits in each $\NNN(\beta_j)$ separately obeys $q \geq  r$, as every time that $P_M^{(i)}$ is changed in the algorithm (including step $i=t$) corresponds to a configuration shown in Fig. \ref{fig:threequbittrafos}. Furthermore, each of these configurations allow for the creation of a disconnected Bell pair.\\
Since the Schmidt coefficients of a single Bell pair are $\lbrace \frac{1}{\sqrt{2}}, \frac{1}{\sqrt{2}} \rbrace$, Lemma~\ref{lem:rule2} yields
\be
\bra{\vec{k}}\left(\ketbra{Cl_n}\right)^{T_M}\ket{\vec{k}} \leq  2^{-q} \:,
\ee
With $q \geq r$, this implies that Eq.~(\ref{eq:9}) holds.\\
In the case $P_M = 0$, we need to show that 
\be
\label{eq:3}
\bra{\vec{k}}\left(\ketbra{Cl_n}\right)^{T_M}\ket{\vec{k}} \leq \frac{1}{2} \:.
\ee
Note that $P_M = 0$ is equivalent to $r \leq 1$. On one hand, $r = 0$ means that for each $i = 1, \dots, n$, either $\NNN(i) \subseteq M$ or $\NNN(i) \subseteq \comp{M}$ holds. Since $M \neq \lbrace 1, \dots, n\rbrace$, deleting all edges within $M$ and $\comp{M}$ will lead to at least one Bell pair, which has Schmidt coefficients $\lbrace \frac{1}{\sqrt{2}}, \frac{1}{\sqrt{2}} \rbrace$. With Lemma~\ref{lem:rule2}, Eq.~(\ref{eq:3}) holds.\\
On the other hand, for $r=1$, Eq.~(\ref{eq:3}) also holds, since due to Fig. \ref{fig:threequbittrafos} the Schmidt coefficients are given by at least one Bell pair. Therefore, the maximal coefficient is $\frac{1}{2}$.\\
This finishes the proof of Proposition \ref{lem:witness}.
\end{enumerate}
\end{proof}

\bibliographystyle{apsrev}

\begin{thebibliography}{15}

\bibitem{experiments}
M. Neeley {\it et al.}, Nature {\bf 467}, 570 (2010);
L. DiCarlo {\it et al.}, Nature {\bf 467}, 574 (2010);
G.~D. Fuchs {\it et al.}, Nat. Phys. {\bf 6}, 668 (2010).

\bibitem{mqc}
H.~J. Briegel {\it et al.}, Nat. Phys. {\bf 5}, 19 (2009).

\bibitem{metrology}
V. Giovannetti {\it et al.}, Science {\bf 306}, 1330 (2004).

\bibitem{biseparability}
A. Ac\'{\i}n {\it et al.}, Phys. Rev. Lett. {\bf 87}, 040401 (2001).

\bibitem{multient}
D. Collins {\it et al.}, Phys. Rev. Lett. {\bf 88}, 170405 (2002);
R. Horodecki {\it et al.}, Rev. Mod. Phys. {\bf 81}, 865 (2009);
O. G\"uhne and G. T\'{o}th, Phys. Rep. {\bf 474}, 1 (2009);
M. Huber {\it et al.}, Phys. Rev. Lett. {\bf 104}, 210501 (2010). 

\bibitem{seevinck}
M. Seevinck and J. Uffink, Phys. Rev. A {\bf 78}, 032101 (2008).

\bibitem{guehneseevinck}
O. G\"uhne and M. Seevinck, New J. Phys. {\bf 12}, 053002 (2010).

\bibitem{PPTcriterion}
A. Peres, Phys. Rev. Lett. {\bf 77}, 1413 (1996); M. Horodecki {\it et al.}, Phys. Rev. Lett. A {\bf 223}, 1 (1996).

\bibitem{clusterstate}
H.~J. Briegel and R. Raussendorf, Phys. Rev. Lett {\bf 86}, 910 (2001).

\bibitem{PPTdef}
A state $\vr=\sum_{ijkl} \vr_{ij,kl} \ket{i}\bra{j}\otimes \ket{k}\bra{l}$ is PPT, if its partial transpose
$\vr^{T_A}=\sum_{ijkl} \vr_{ji,kl} \ket{i}\bra{j}\otimes \ket{k}\bra{l}$ has no negative eigenvalues.

\bibitem{pianimora}
M. Piani and C.~E. Mora, Phys. Rev. A {\bf 75}, 012305 (2007);
G. T\'{o}th and O. G\"uhne, Phys. Rev. Lett. {\bf 102}, 170503 (2009).

\bibitem{doherty}
A.~C. Doherty, P.~A. Parrilo, and F.~M. Spedalieri, 
Phys. Rev. Lett. \textbf{88}, 187904 (2002);
A.~C. Doherty, P.~A. Parrilo, and F.~M. Spedalieri, 
Phys. Rev. A \textbf{69}, 022308 (2004).

\bibitem{sdp}
L. Vandenberghe and S. Boyd, SIAM Rev. {\bf 38}, 49 (1996).

\bibitem{decwit}
M. Lewenstein {\it et al.}, Phys. Rev. A {\bf 62}, 052310 (2000).

\bibitem{matlabcentral}
\texttt{http://www.mathworks.com/matlabcentral/}\\
\texttt{fileexchange/30968}

\bibitem{yalmip}
J. L\"{o}fberg, Proceedings of the CACSD Conference, Taipei, Taiwan (2004).

\bibitem{sedumi}
J.~F. Sturm, Opt. Meth. and Softw. {\bf 11}, 625 (1999).

\bibitem{sdpt3}
K.~C. Toh {\it et al.}, Opt. Meth. and Softw. {\bf 11}, 545 (1999); R.~H. Tutuncu {\it et al.}, Math. Prog. Ser. B {\bf 95}, 189 (2003).

\bibitem{schukinvogel}
E. Shchukin and W. Vogel, Phys. Rev. Lett. {\bf 95}, 230502 (2005); 
A. Miranowicz {\it et al.}, Phys. Rev. A {\bf 80}, 052303 (2009).

\bibitem{haeseler}
H. H\"aseler, T. Moroder, and N. L\"utkenhaus, Phys. Rev. A {\bf 77}, 032303 (2008).

\bibitem{hylluseisert}
P. Hyllus and J. Eisert, New J. Phys. {\bf 8}, 51 (2006).

\bibitem{negativity}
G. Vidal and R.~F. Werner, Phys. Rev. A {\bf 65}, 032314 (2002); M.~B. Plenio, Phys. Rev. Lett. {\bf 95}, 090503 (2005).

\bibitem{states}
The considered states are given by $\ket{GHZ_3} = (\ket{000} + \ket{111})/\sqrt{2}$, $\ket{W_3} = (\ket{001}+\ket{010}+\ket{100})/\sqrt{3}$, $\ket{W_4} = (\ket{0001}+\ket{0010}+\ket{0100}+\ket{1000})/2$, $\ket{Cl_4} = (\ket{0000}+\ket{0011} + \ket{1100} -\ket{1111})/2$, $\ket{D_{2,4}} = (\ket{0011}+\ket{1100}+\ket{0101}+\ket{0110}+\ket{1001}+\ket{1010})/\sqrt{6}$ and $\ket{\Psi_{S,4}} = (\ket{0011}+\ket{1100}-\frac{1}{2}(\ket{0101}+\ket{0110}+\ket{1001}+\ket{1010}))/\sqrt{3}$.

\bibitem{whitetol}
This quantity is commonly used to characterize the robustness of entanglement or nonlocality criteria against noise [e.g. D. Kaszlikowski {\it et al.}, Phys. Rev. Lett. {\bf 85}, 4418 (2000); M. Bourennane {\it et al.}, Phys. Rev. Lett. {\bf 92}, 087902 (2004)].

\bibitem{tokunaga}
Y. Tokunaga {\it et al.}, Phys. Rev. A {\bf 74}, 020301(R) (2006).

\bibitem{huberdicke}
M.~Huber {\it et al.}, arXiv: 1011.4579.

\bibitem{toolboxpaper}
O. G\"uhne {\it et al.}, Phys. Rev. A {\bf 76}, 030305(R) (2007).

\bibitem{blatt}
T. Monz {\it et al.}, Phys. Rev. Lett. {\bf 106}, 130506 (2011).

\bibitem{fullsepball}
S.L. Braunstein {\it et al.}, Phys. Rev. Lett. {\bf 83}, 1054 (1999).

\bibitem{clusterscale}
M. Hajdu\v{s}ek {\it et al.}, New J. Phys. {\bf 12}, 053015 (2010). 

\end{thebibliography}

\begin{thebibliography}{15}
\bibitem{negativityapp}
G. Vidal and R.~F. Werner, Phys. Rev. A {\bf 65}, 032314 (2002); M.~B. Plenio, Phys. Rev. Lett. {\bf 95}, 090503 (2005).

\bibitem{bennett}
C.~H. Bennett {\it et al.}, Phys. Rev. A {\bf 59}, 1070 (1999).

\bibitem{subscripts}
The subscripts $M$ do not matter here.

\bibitem{dohertyapp}
A.~C. Doherty, P.~A. Parrilo, and F.~M. Spedalieri, 
Phys. Rev. Lett. \textbf{88}, 187904 (2002);
A.~C. Doherty, P.~A. Parrilo, and F.~M. Spedalieri, 
Phys. Rev. A \textbf{69}, 022308 (2004).

\bibitem{guehneseevinckapp}
O. G\"uhne and M. Seevinck, New J. Phys. {\bf 12}, 053002 (2010).

\bibitem{spedalieri}
F.~M. Spedalieri, Phys. Rev. A {\bf 76}, 032318 (2007).

\bibitem{navascues09}
M. Navascu\'es, M. Owari, and M.B. Plenio,
Phys. Rev. Lett. {\bf 103}, 160404 (2009).

\bibitem{thiang10}
G.~C. Thiang, Phys. Rev. A {\bf 82}, 012332 (2010).

\bibitem{entanglement_witness}
M. Horodecki, P. Horodecki, and R.Horodecki,
Phys. Lett. A {\bf 223},  1 (1996).

\bibitem{barreiro}
J. Barreiro {\it et al.}, Nat. Phys. 6, 943 (2010).
\end{thebibliography}

\end{document}